\newcommand{\dP}{\mathrm{P}}
\newcommand{\dQ}{\mathrm{Q}}
\newcommand{\bPP}[1]{{\dP_{#1}}}
\newcommand{\bQQ}[1]{{\dQ_{#1}}}
\newcommand{\bPr}[1]{{\mathbb{P}}\left(#1\right)}
\newcommand{\bP}[2]{\mathrm{P}_{#1}\left({#2}\right)}
\newcommand{\bQ}[2]{\mathrm{Q}_{#1}\left({#2}\right)}
\newcommand{\cA}{{\mathcal A}}
\newcommand{\cE}{{\mathcal E}}
\newcommand{\cF}{{\mathcal F}}
\newcommand{\cH}{{\mathcal H}}
\newcommand{\cJ}{{\mathcal J}}
\newcommand{\cK}{{\mathcal K}}
\newcommand{\cO}{{\mathcal O}}
\newcommand{\cP}{{\mathcal P}}
\newcommand{\cQ}{{\mathcal Q}}
\newcommand{\cT}{{\mathcal T}}
\newcommand{\cV}{{\mathcal V}}
\newcommand{\bx}{\mathbf{x}}
\newcommand{\cX}{{\mathcal X}}
\newcommand{\bX}{{\bf X}}
\newcommand{\by}{\mathbf{y}}
\newcommand{\bY}{{\bf Y}}
\newcommand{\cY}{{\mathcal Y}}
\newcommand{\cZ}{{\mathcal Z}}
\newcommand{\bZ}{{\bf Z}}
\newcommand{\ep}{\epsilon}
\newcommand{\la}{\lambda}
\newcommand{\ie}{{\it i.e.}}
\newcommand{\cf}{{\it cf.}}
\newcommand{\ttlvrn}[2]{\left\| #1 - #2\right\|}
\newtheorem{theorem}{Theorem}
\newtheorem{roughtheorem}{Result}
\newtheorem{proposition}[theorem]{Proposition}
\newtheorem*{corollary*}{Corollary}
\newtheorem{lemma}[theorem]{Lemma}
\newtheorem*{lemma*}{Lemmas}
\theoremstyle{remark}
\newtheorem*{remark*}{Remark}
\newtheorem*{remarks*}{Remarks}
\theoremstyle{definition}
\newtheorem{definition}{Definition}
\newtheorem{example}{Example}
\newtheorem{remark}{Remark}
\newenvironment{protocol}[1][htb]
  {%
   \begin{algorithm}[#1]%
  }{\end{algorithm}}
\renewcommand{\ep}{\varepsilon}
\newcommand{\prot}{\Pi}
\newcommand{\lamin}{\la_{\min}}
\newcommand{\lamax}{\la_{\max}}
\newcommand{\Romn}{\overline{H}({\bf X}\triangle {\bf Y})}
\newcommand{\romn}[2]{h(#1 \triangle #2)}
\newcommand{\ed}[2]{h_{#1}\left(#2\right)}
\newcommand{\edxgy}{\ed {}{X|Y}}
\newcommand{\hash}{\cH}
\begin{document}

\title{Interactive Communication for Data Exchange}

\author{Himanshu Tyagi,~\IEEEmembership{Member,~IEEE}, 
Pramod Viswanath,~\IEEEmembership{Fellow,~IEEE}, 
and
Shun Watanabe,~\IEEEmembership{Member,~IEEE}
\thanks{H. Tyagi is with the Department of Electrical Communication Engineering, Indian
Institute of Science, Bangalore 560012, India. 
Email: htyagi@ece.iisc.ernet.in}
\thanks{P. Viswanath is with the  
Department of Electrical and Computer Engineering,
University of Illinois, Urbana-Champaign, IL 61801, USA.  Email:
\{pramodv\}@illinois.edu
}
\thanks{S. Watanabe is with the Department of Computer and Information Sciences, 
Tokyo University of Agriculture and Technology, Tokyo 184-8588, Japan. 
Email: shunwata@cc.tuat.ac.jp}
}

\maketitle

\renewcommand{\thefootnote}{\arabic{footnote}}
\setcounter{footnote}{0}

\begin{abstract}
Two parties observing correlated data seek to exchange their
data using interactive communication. How many bits
must they communicate? 
We propose a new interactive protocol for data exchange 
which increases the communication size in steps until the task
is done. We also derive a lower bound on the minimum number
of bits that is based on relating the data exchange problem
to the secret key agreement problem.
Our single-shot analysis applies to all discrete random variables
and yields upper and lower bounds of a similar form. In fact,
the bounds are asymptotically tight and lead to
a characterization of 
the optimal rate of communication needed for data exchange 
for a general source sequence such as a mixture of IID random variables
as well as the optimal second-order asymptotic term in the
length of communication needed for data exchange for IID random variables,
when the probability of error is fixed.
This gives a precise characterization of the asymptotic reduction in 
the length of optimal communication due to interaction; in particular, two-sided
Slepian-Wolf compression is strictly suboptimal.
\end{abstract}

\section{Introduction}
Random correlated data $(X,Y)$ is distributed 
between two parties with the first observing $X$
and the second $Y$. What is the optimal communication protocol 
for the two parties to exchange their data? 
We allow
(randomized) interactive communication protocols and 
a nonzero probability of error.
This basic problem was introduced by El Gamal and Orlitsky 
in \cite{OrlEl84} where they presented bounds on 
the average number of bits of communication needed by deterministic 
protocols for data exchange without error\footnote{They also illustrated the advantage of
using randomized protocols when error is allowed}.
When interaction is not allowed,
a simple solution is to apply Slepian-Wolf compression
\cite{SleWol73} for each of the two one-sided data transfer
problems.
The resulting protocol was shown to be of optimal rate, even
in comparison with interactive protocols, when the underlying
observations are {\it independent and identically distributed} 
(IID) by Csisz\'ar and Narayan in \cite{CsiNar04}.
They considered a multiterminal version of this problem, namely the problem of
attaining {\it omniscience}, and established a lower bound
on the rate of communication to show that interaction does
not help in improving the asymptotic rate of communication 
if the probability of error vanishes to $0$.
However, interaction
is known to be beneficial in one-sided data transfer
($cf.$ \cite{Orlitsky90, YanH10, YanHUY08, Dra04}). Can interaction help to 
reduce the communication needed for data exchange, and if so, what 
is the minimum length of interactive 
communication needed for data exchange? 

We address the data exchange problem, 
illustrated in Figure~\ref{f:problem_description},
and provide answers to the questions raised above. 
We provide a new approach for establishing {\it converse bounds} for
problems with interactive communication that relates efficient
communication to secret key agreement and uses the recently
established conditional independence testing bound for the length of a
secret key \cite{TyaWat14}. Furthermore, we propose an {\it
  interactive protocol for data exchange} which matches the
performance of our lower bound in several asymptotic regimes. As a
consequence of the resulting single-shot bounds, we obtain a
characterization of the optimal rate of communication needed for data
exchange for a general sequence $(X_n, Y_n)$ such as a mixture of IID
random variables as well as the optimal second-order asymptotic term
in the length of communication needed for data exchange for the IID
random variables $(X^n, Y^n)$, first instance of such a result in
source coding with interactive communication\footnote{ In a different
  context, recently \cite{AltWag14} showed that the second-order
  asymptotic term in the size of good channel codes can be improved
  using feedback.}.  This in turn leads to a precise characterization
of the gain in asymptotic length of communication due to interaction.

\paragraph*{Related work} The role of interaction in multiparty data
compression has been long recognized. For the 
data exchange problem, this was first studied in \cite{OrlEl84} where
interaction was used to facilitate data exchange by communicating
optimally few bits in a single-shot setup with zero error. In a
different direction, \cite{Dra04, YanH10, YanHUY08} showed that interaction
enables a universal variable-length coding for the Slepian-Wolf
problem (see, also, \cite{FedS02} for a related work on universal
encoding). Furthermore, it was shown in \cite{YanH10}
that the redundancy in variable-length Slepian-Wolf coding with
known distribution can be improved by interaction. In fact,
the first part of our protocol is essentially the same as the one in 
\cite{YanH10} (see, also, \cite{BraRao11}) wherein the length of the
communication is increased in steps until the
second party can decode. In \cite{YanH10}, the step size was chosen to be
$\cO(\sqrt{n})$ for the universal scheme and roughly $\cO(n^{1/4})$
for the known distribution case. We recast this protocol in an information
spectrum framework (in the spirit of \cite{HayTyaWat14ii}) and allow 
for a flexible choice of the step size. By choosing this step size
appropriately, we obtain exact asymptotic results in various regimes.
Specifically, the optimal choice of this step size $\Delta$ is given by the
square root of the essential {\it length of the spectrum} of
$\bPP{X|Y}$, 
$i.e.$, $\Delta = \sqrt{\lamax - \lamin}$ where $\lamax$ and $\lamin$
are large probability upper and lower bounds, respectively, for the
random variable $h(X|Y) = -\log
\bP{X|Y}{X|Y}$. The $\cO(\sqrt{n})$ choice for the universal case of
\cite{YanH10} follows as a special case since for the universal setup with IID source
$h(X^n|Y^n)$ can vary over an interval of length $\cO(n)$. 
Similarly, for a given IID source, $h(X^n|Y^n)$ can essentially vary over an
interval of length $\cO(\sqrt{n})$ for which the choice of $\Delta =
\cO(n^{1/4})$ in \cite{YanH10} is appropriate by our general
principle. While the optimal choice of $\Delta$ (upto the order) was
identified in \cite{YanH10} for special cases, the optimality of this
choice was not shown there. Our main contribution is a converse which
shows that our achieved length of communication is optimal in several
asymptotic regimes. As a by-product, we obtain a precise characterization of gain due to
interaction, one of the few such instances available in the literature. 
Drawing on the techniques introduced in this paper, the much more involved problem of simulation of interactive protocols was addressed in \cite{TyagiVVW16, TyagiVVW17}.   
 
\paragraph*{Organization} The remainder of this paper is organized as follows: We
formally describe the data exchange problem in
Section~\ref{sec:problem}.  Our results are summarized in
Section~\ref{sec:main_results}. Section~\ref{sec:achievability}
contains our single-shot achievability scheme, along with the
necessary prerequisites to describe it, and Section~\ref{sec:converse}
contains our single-shot converse bound.  The strong converse and
second-order asymptotics for the communication length and the optimal
rate of communication for general sources are obtained as a
consequence of single-shot bounds in Section~\ref{s:general_sources}
and \ref{s:strong_converse}, respectively. The final section contains
a discussion of our results and extensions to the error exponent
regime.
\section{Problem formulation} \label{sec:problem}

\begin{figure}[t]
\begin{center}
\includegraphics[scale=0.35]{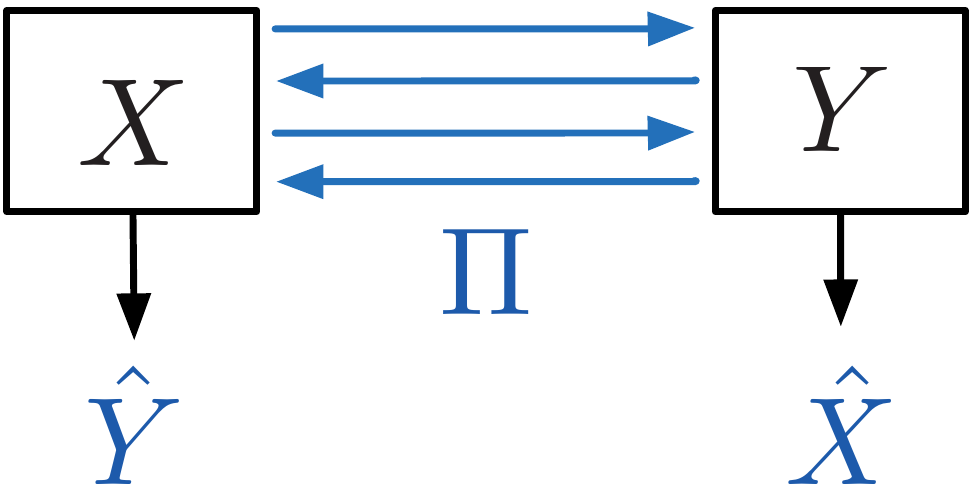}
\caption{The data exchange problem.}
\label{f:problem_description}
\end{center}
\end{figure}

Let the first and the second party, respectively, observe discrete
random variables $X$ and $Y$ taking values in finite sets $\cX$ and
$\cY$. The two parties wish to know each other's observation using
interactive communication over a noiseless (error-free)
channel. The parties have access to local randomness (private coins)
$U_\cX$ and $U_\cY$ and shared randomness (public coins) $U$ such that
the random variables $U_\cX, U_\cY, U$ are finite-valued and mutually
independent and independent jointly of $(X, Y)$.
For simplicity, we resrict to {\it tree-protocols} ($cf.$
\cite{KushilevitzNisan97}).   
A tree-protocol $\pi$ consists of a binary tree, termed the
{\it protocol-tree},  with
the vertices labeled by $1$ or $2$. The protocol starts at
the root and proceeds towards the leaves. When the protocol is at
vertex $v$ with label $i_v\in\{1,2\}$, party $i_v$ communicates a bit
$b_v$ based on its local observations $(X, U_\cX, U)$ for $i_v=1$
or $(Y, U_\cY, U)$ for $i_v =2$. The
protocol proceeds to the left- or right-child of $v$, respectively, if
$b_v$ is $0$ or $1$. The protocol terminates when it reaches a leaf,
at which point each party produces an output based on its local
observations and the bits communicated during the protocol, namely the
transcript $\Pi=\pi(X,Y, U_\cX,U_\cY, U)$. 
Figure~\ref{f:tree_protocols} shows an example of a
protocol tree. 

\begin{figure}[h]
\centering
\includegraphics[scale=0.3]{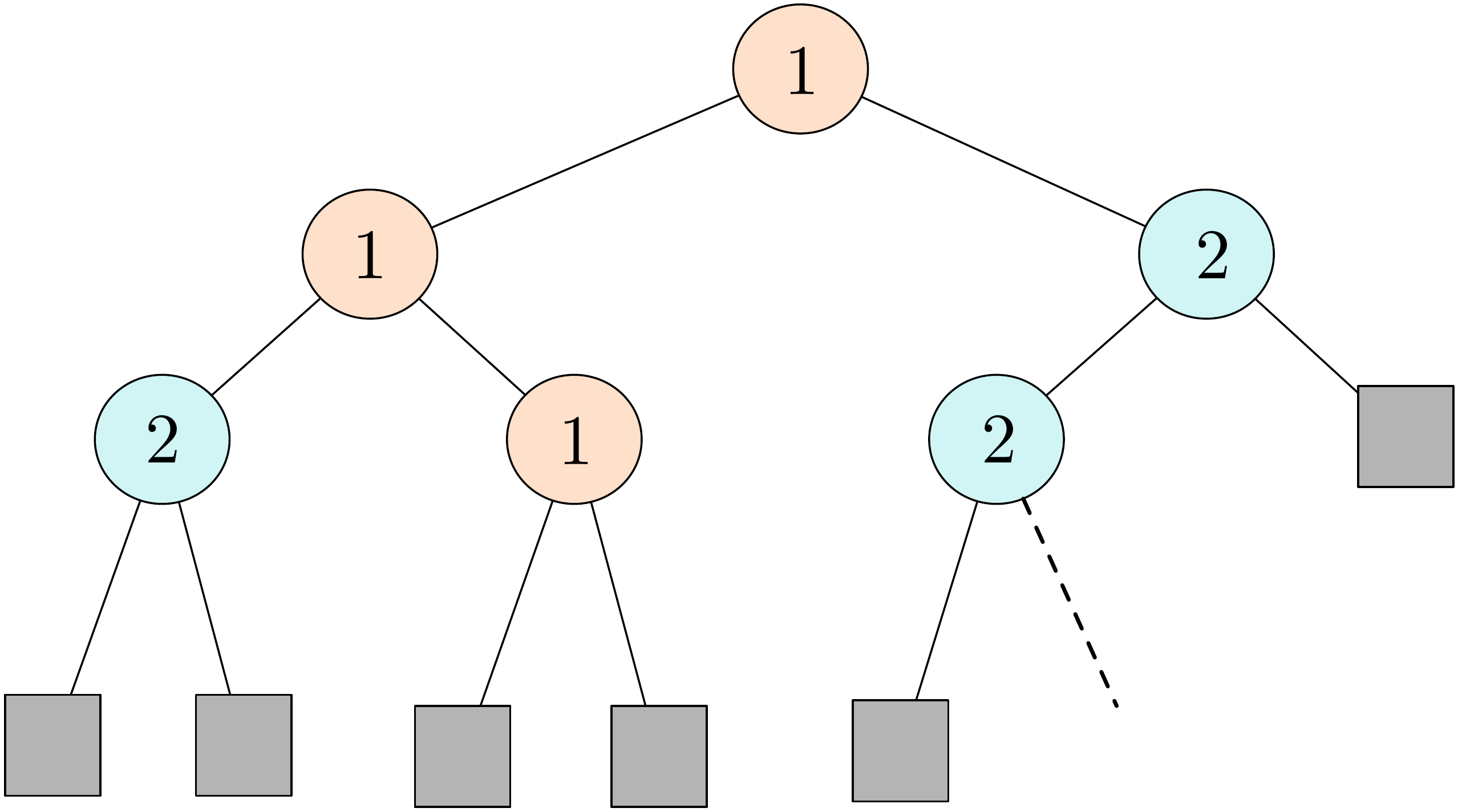}
\caption{A two-party protocol tree.}
\label{f:tree_protocols}
\end{figure}

The {\em length of a protocol} $\pi$, denoted $|\pi|$, is the maximum
accumulated number of bits transmitted in any realization of the
protocol, namely the depth of the protocol tree. 

\begin{definition}
For $0 \le \ep < 1$, a protocol $\pi$ attains $\ep$-{\em data
  exchange} ($\ep$-DE) if there exist functions $\hat{Y}$
and $\hat{X}$ of $(X,\Pi, U_\cX, U)$ and $(Y,\Pi, U_\cX, U)$,
respectively, such that 
\begin{align} 
\bPP{}(\hat{X} = X,~\hat{Y} = Y) \ge 1 - \varepsilon.
\label{e:omn_error_def}
\end{align}
The {\em minimum communication for $\ep$-DE} $L_\ep(X,Y)$ is the
infimum of lengths of protocols\footnote{By derandomizing
  \eqref{e:omn_error_def}, it is easy to see that
local and shared randomness does not help, and deterministic protocols
attain $L_\ep(X,Y)$.} that attain $\ep$-DE, $i.e.$,
$L_\ep(X,Y)$ is the minimum number of bits that must be communicated
by the two parties in order to exchange their observed data with
probability of error less than $\ep$.
\end{definition}
Protocols with $2$ rounds of communication $\Pi_1$ and $\Pi_2$ which
are functions of only $X$ and $Y$, respectively, are termed {\it
  simple protocols}. We denote by
$L_\ep^{\mathrm{s}}(X,Y)$ the minimum communication for $\ep$-DE by a
simple protocol.
\vspace*{-0.1cm}

\section{Summary of results} \label{sec:main_results}
To describe our results, denote
by $h(X) = -\log \bP{X}{X}$ and $h(X|Y) = -\log \bP{X|Y}{X|Y}$,
respectively, the {\it entropy density} of $X$ and the {\it
  conditional entropy density} of $X$ given $Y$.  Also, pivotal in our
results is a quantity we call the {\it sum conditional entropy
  density} of $X$ and $Y$ defined as
\[
\romn XY := h(X|Y) + h(Y|X).
\]

{\bf An interactive data exchange protocol.} Our data exchange
protocol is based on an interactive version of the Slepian-Wolf
protocol where the length of the communication is increased in steps
until the second party decodes the data of the first.  Similar
protocols have been proposed earlier for distributed data compression
in \cite{FedS02, YanH10}, for protocol simulation in \cite{BraRao11},
and for secret key agreement in \cite{HayTyaWat14i,HayTyaWat14ii}.

In order to send $X$ to an observer of $Y$, a single-shot version of
the Slepian-Wolf protocol was proposed in \cite{MiyKan95} (see, also,
\cite[Lemma 7.2.1]{Han03}). Roughly speaking, this protocol simply
hashes $X$ to as many bits as the right most point in the
spectrum\footnote{Spectrum of a distribution $\bPP{X}$ refers, loosely, to the
distribution of the random variable $-\log \bP{X}{X}$.}  of $\bPP{X|Y}$. The main shortcoming
of this protocol for our purpose is that it sends the same number of
bits for every realization of $(X,Y)$. However, we would like to use
as few bits as possible for sending $X$ to party 2 so that the
remaining bits can be used for sending $Y$ to party 1. Note that once
$X$ is recovered by party 2 correctly, it can send $Y$ to Party 1
without error using, say, Shannon-Fano-Elias coding (eg.~see
\cite[Section 5]{CovTho06}); the length of this second communication
is $\lceil h(Y|X) \rceil$ bits. Our protocol accomplishes the first
part above using roughly $h(X|Y)$ bits of communication.

Specifically, in order to send $X$ to $Y$ we use a {\it spectrum
  slicing technique} introduced in \cite{Han03} (see, also,
\cite{HayTyaWat14i, HayTyaWat14ii}).  We divide the support $[\lamin,
  \lamax]$ of spectrum of $\bPP{X|Y}$ into $N$ slices size $\Delta$
each; see Figure~\ref{f:spectrum_slicing} for an illustration.
\begin{figure}[h]
\begin{center}
\includegraphics[scale=0.35]{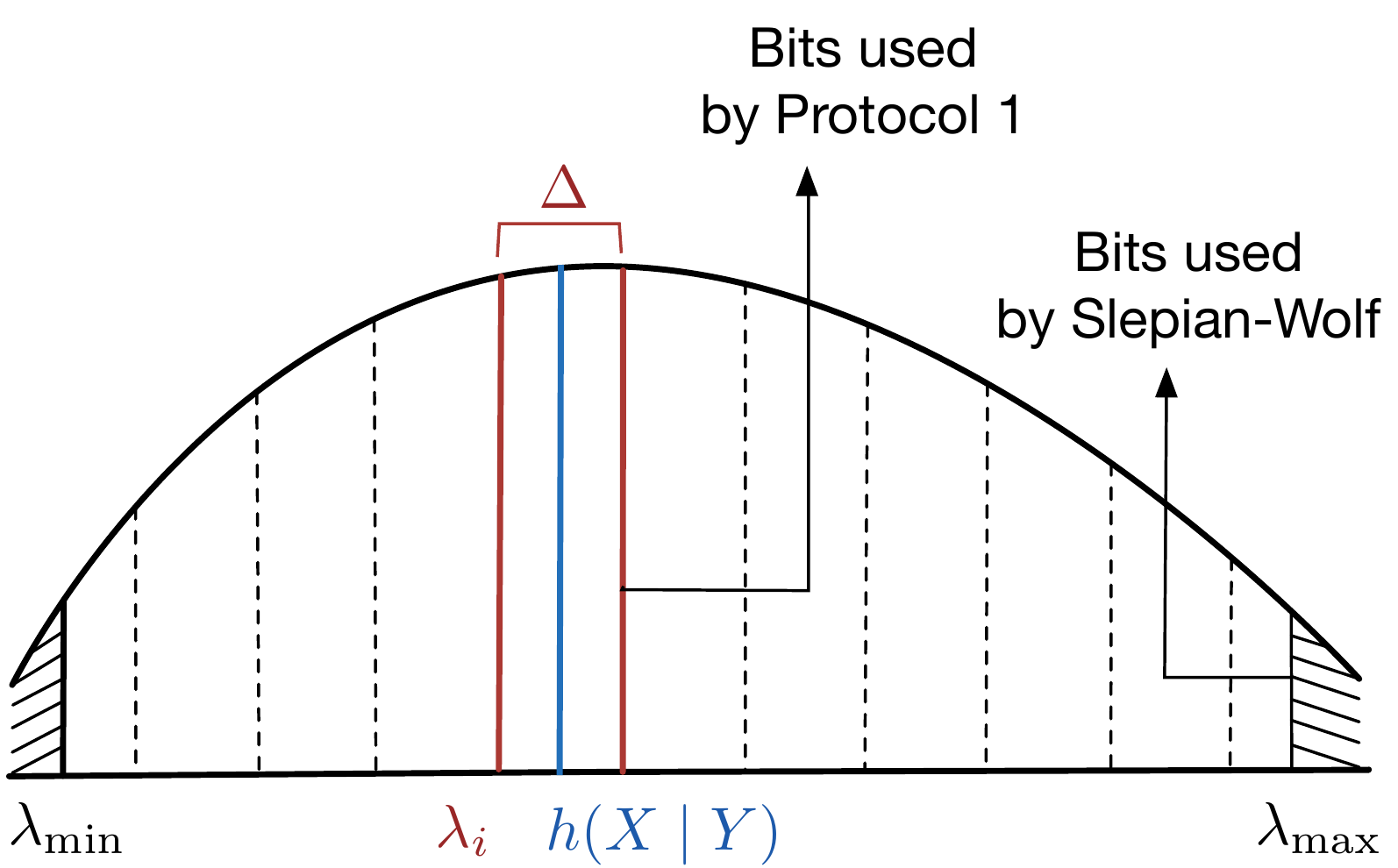}
\caption{Spectrum slicing in
  Protocol~\ref{p:slepian_wolf_interactive}.}
\label{f:spectrum_slicing}
\end{center}
\end{figure}

The protocol begins with the left most slice and party 1 sends
$\lamin+\Delta$ hash bits to party 2. If party 2 can find a unique $x$
that is compatible with the received hash bits and 
$h(x|Y)$ is within the current slice of conditional information spectrum, it sends back an ${\rm
  ACK}$ and the protocol stops. Else, party 2 sends back a ${\rm
  NACK}$ and the protocol now moves to the next round, in which Party
1 sends additional $\Delta$ hash bits. The parties keep on moving to
the next slice until either party 2 sends an ${\rm ACK}$ or all slices
are covered. We will show that this protocol is reliable and
uses no more than $h(X|Y) + \Delta + N$ bits of communication for
each realization of $(X,Y)$. As mentioned above, once party 2 gets
$X$, it sends back $Y$ using $h(Y|X) + 1$ bits, thereby resulting in
an overall communication of $\romn XY+ \Delta +N+1$ bits. In our applications, we
shall choose $N$ and $\Delta$ to be of negligible order in comparison
with the tail bounds for $\romn XY$. Thus, we have the following upper
bound on $L_\ep(X,Y)$. (The statement here is rough; see
Theorem~\ref{t:interactive_data_exchange} below for a precise
version.)
\begin{roughtheorem}[{\bf Rough statement of the single-shot upper bound}] For every $0< \ep <1$,
\[
L_\ep(X,Y) \,\lesssim\, \inf\{\gamma: \bPr{\romn XY >\gamma} \leq
\ep\}.
\]
\end{roughtheorem}

{\bf A converse bound.} Our next result, which is perhaps the main
contribution of this paper, is a lower bound on $L_\ep(X,Y)$.  This
bound is derived by connecting the data exchange problem to the
two-party secret key agreement problem. For an illustration of our
approach in the case of IID random variables $X^n$ and $Y^n$, note
that the optimal rate of a secret key that can be generated is given
by $I(X\wedge Y)$, the mutual information between $X$ and $Y$
\cite{Mau93, AhlCsi93}. Also, using a privacy amplification argument
($cf.$ \cite{BenBraCreMau95, Ren05}), it can be shown that a data
exchange protocol using $nR$ bits can yield roughly $n(H(XY) - R)$
bits of secret key. Therefore, $I(X\wedge Y)$ exceeds $H(XY) - R$,
which further gives 
\[
R \geq H(X| Y) + H(Y| X). 
\]
This connection
between secret key agreement and data exchange was noted first in
\cite{CsiNar04} where it was used for designing an optimal rate secret
key agreement protocol. Our converse proof is, in effect, a
single-shot version of this argument.

Specifically, the ``excess'' randomness generated when the parties
observing $X$ and $Y$ share a communication $\Pi$ can be extracted as
a secret key independent of $\Pi$ using the {\it leftover hash lemma}
\cite{ImpLevLub89, RenWol05}. Thus, denoting by $S_\ep(X,Y)$ the
maximum length of secret key and by $H$ the length of the common
randomness ($cf.$ \cite{AhlCsi93}) generated by the two parties during
the protocol, we get 
\[
H - L_\ep(X,Y) \leq S_\ep(X,Y).
\]

Next, we apply the recently established {\it conditional independence
testing} upper bound for $S_\ep(X,Y)$ \cite{TyaWat14, TyaWat14ii},
which follows by reducing a binary hypothesis testing problem to
secret key agreement. However, the resulting lower bound on
$L_\ep(X,Y)$ is good only when the spectrum of $\bPP{XY}$ is
concentrated. Heuristically, this slack in the lower bound arises
since we are lower bounding the worst-case communication complexity of
the protocol for data exchange -- the resulting lower bound need not
apply for every $(X,Y)$ but only for a few realizations of $(X,Y)$
with probability greater than $\ep$. To remedy this shortcoming, we
once again take recourse to spectrum slicing and show that there
exists a slice of the spectrum of $\bPP{XY}$ where the protocol
requires sufficiently large number of bits;
Figure~\ref{f:spectrum_slicing2} illustrates this approach.  The
resulting lower bound on $L_\ep(X,Y)$ is stated below roughly, and a
precise statement is given in Theorem~\ref{t:converse_general1}.
\vspace{-0.3cm}
\begin{figure}[h]
\begin{center}
\includegraphics[scale=0.35]{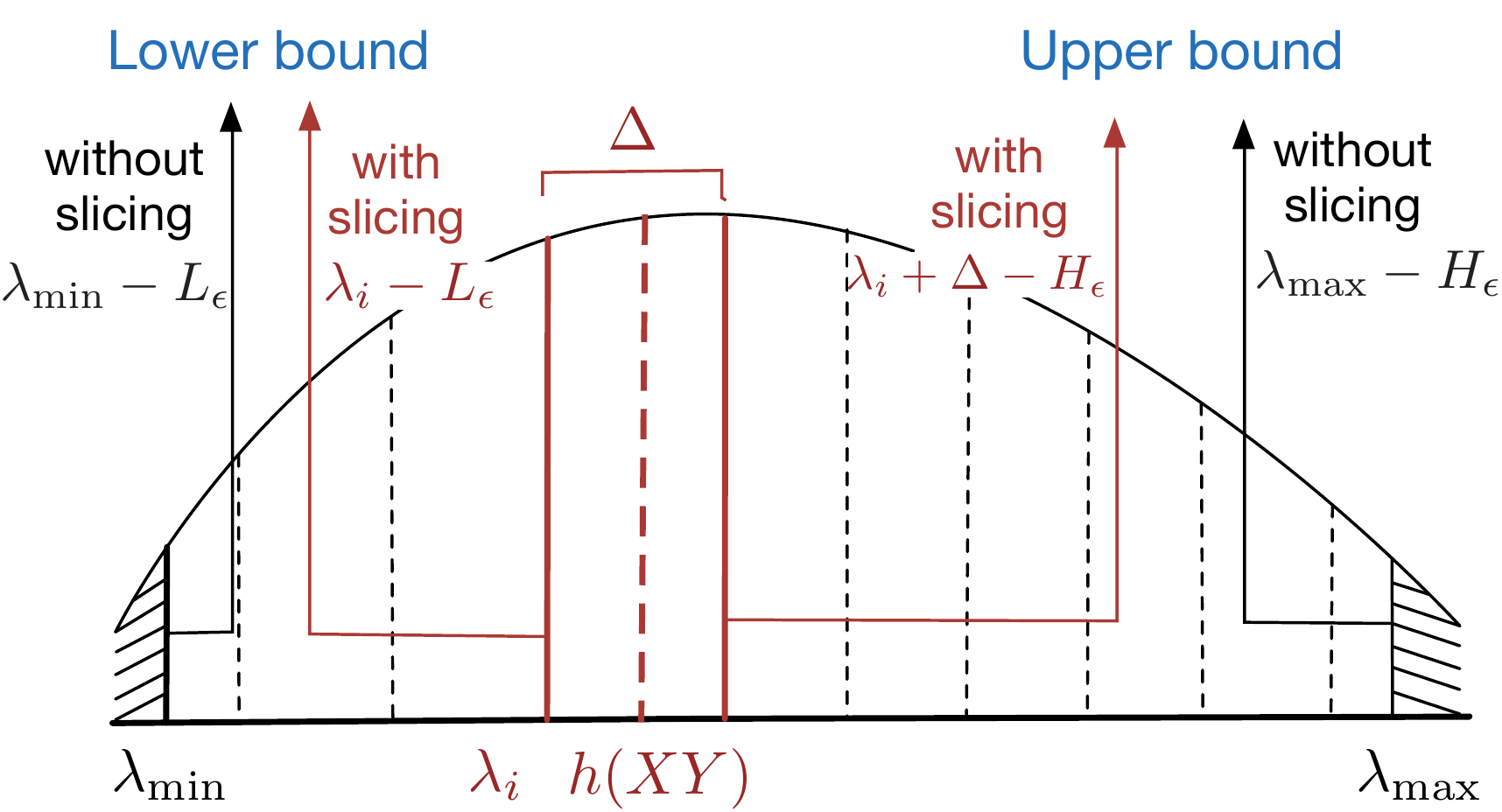}
\caption{Bounds on secret key length leading to the converse.  Here
  $L_\ep$ abbreviates $L_\ep(X,Y)$ and $H_\ep$ denotes the $\ep$-tail of $\romn
  XY$.}
\label{f:spectrum_slicing2}
\end{center}
\end{figure}

\begin{roughtheorem}[{\bf Rough statement of the single-shot lower bound}]
\label{res:lower_bound}
For every $0< \ep <1$,
\[
L_\ep(X,Y) \,\gtrsim\, \inf\{\gamma: \bPr{\romn XY >\gamma} \leq
\ep\}.
\]
\end{roughtheorem}
Note that the upper and the lower bounds for $L_\ep(X,Y)$ in the two
results above appear to be of the same form (upon ignoring a few
error terms). In fact, the displayed term dominates asymptotically
and leads to tight bounds in several asympotitic regimes. Thus, the
imprecise forms above capture the spirit of our bounds. 

\vspace{1em}
{\bf Asymptotic optimality.} The single-shot bounds stated above are
asymptotically tight up to the first order term for any sequence of
random variables $(X_n, Y_n)$, and up to the second order term for a
sequence of IID random variables $(X^n, Y^n)$.

Specifically, consider a general source sequence $(\bX, \bY) = \{(X_n,
Y_n)\}_{n=1}^\infty$.  We are interested in characterizing the minimum
asymptotic rate of communication for asymptotically error-free data
exchange, and seek its comparison with the minimum rate possible using
simple protocols.
\begin{definition}\label{d:R_star}
The minimum rate of communication for data exchange $R^*$ is defined
as
\[
R^*(\bX,\bY) = \inf_{\ep_n}\limsup_{n\rightarrow \infty} \frac 1n
L_{\ep_n}(X_n, Y_n),
\]
where the infimum is over all $\ep_n\rightarrow 0$ as $n \rightarrow
\infty$.  The corresponding minimum rate for simple protocols is
denoted by $R^*_s$.
\end{definition}
Denote by $\Romn$, $\overline{H}(\bX| \bY)$, and
$\overline{H}(\bY|\bX)$, respectively, the $\limsup$ in probability of
random variables $\romn{X_n}{Y_n}$, $h(X_n |Y_n)$, and $h(Y_n|X_n)$.
The quantity $\overline{H}(\bX| \bY)$ is standard in information
spectrum method \cite{HanVer93, Han03} and corresponds to the
asymptotically minimum rate of communication needed to send $X_n$ to
an observer of $Y_n$ \cite{MiyKan95} (see, also, \cite[Lemma
  7.2.1]{Han03}).  Thus, a simple communication protocol of rate
$\overline{H}(\bX| \bY) + \overline{H}(\bY|\bX)$ can be used to
accomplish data exchange. In fact, a standard converse argument can be
used to show the optimality of this rate for simple
communication. Therefore, when we restrict 
ourselves to simple protocols, the asymptotically minimum rate of
communication needed is
\begin{align*}
R^*_s(\bX,\bY) = \overline{H}(\bX| \bY) + \overline{H}(\bY|\bX).
\end{align*}
As an illustration, consider the case when $(X_n, Y_n)$ are generated
by a mixture of two $n$-fold IID distributions
$\mathrm{P}_{X^nY^n}^{(1)}$ and $\mathrm{P}_{X^nY^n}^{(2)}$.  For this
case, the right-side above equals ($cf.$~\cite{Han03})
\begin{align*}
& \max\{H(X^{(1)}\mid Y^{(1)}),H(X^{(2)}\mid Y^{(2)})\} \\
&~~~+ \max\{H(Y^{(1)}\mid X^{(1)}),H(Y^{(2)}\mid X^{(2)})\}.
\end{align*}
Can we improve this rate by using interactive communication?  Using
our single-shot bounds for $L_\ep(X,Y)$, we answer this question in
the affirmative.

\begin{roughtheorem}[{\bf Min rate of communication for data exchange}]
For a sequence of sources $(\bX,\bY) = \{(X_n, Y_n)\}_{n=1}^\infty$,
\[
R^*(\bX,\bY) = \Romn.
\]
\end{roughtheorem}
For the mixture of IID example above,
\begin{align*}
\Romn &= \max\{H(X^{(1)}\mid Y^{(1)}) + H(Y^{(1)}\mid X^{(1)}), \\
&~~~H(X^{(2)}\mid Y^{(2)}) +H(Y^{(2)}\mid X^{(2)})\},
\end{align*}
and therefore, simple protocols are strictly suboptimal in general.
Note that while the standard information spectrum techniques suffice
to prove the converse when we restrict to simple protocols, their
extension to interactive protocols is unclear and our single-shot
converse above is needed.

Turning now to the case of IID random variables, $i.e.$ when $X_n =
X^n = (X_1,..., X_n)$ and $Y_n = Y^n = (Y_1, ..., Y_n)$ are $n$-IID
repetitions of random variables $(X, Y)$. For brevity, denote by
$R^*(X,Y)$ the corresponding minimum rate of communication for data
exchange, and by $H(X\triangle Y)$ and $V$, respectively, the mean and
the variance of $\romn XY$. Earlier, Csisz\'ar and Narayan
\cite{CsiNar04} showed that $R^*(X,Y) = H(X\triangle Y)$. We are
interested in a finer asymptotic analysis than this first order
characterization.

In particular, we are interested in characterizing the asymptotic
behavior of $L_\ep(X^n, Y^n)$ up to the second-order term, for
every fixed $\ep$ in (0,1). We need the following notation:
\[
R^*_\ep(X,Y) = \lim_{n \rightarrow \infty}\frac 1n L_\ep(X^n, Y^n),
\quad 0< \ep < 1.
\]
Note that $R^*(X,Y) = \sup_{\ep\in (0,1)}R^*_\ep(X,Y)$.  Our next
result shows that $R^*_\ep(X,Y)$ does not depend on $\ep$ and
constitutes a {\it strong converse} for the result in \cite{CsiNar04}.
\begin{roughtheorem}[{\bf Strong converse}] \label{result:strong-converse}
For every $0< \ep <1$,
\[
R^*_\ep(X,Y) = H(X\triangle Y).
\]
\end{roughtheorem}
In fact, this result follows from a general result characterizing the
second-order asymptotic term\footnote{Following the pioneering work of
  Strassen \cite{Str62}, study of these second-order terms in coding
  theorems has been revived recently by Hayashi \cite{Hayashi08,
    Hay09} and Polyanskiy, Poor, and Verd\'u~\cite{PolPooVer10}.}.
\begin{roughtheorem}[{\bf Second-order asymptotic behavior}]
For every $0< \ep < 1 $,
\begin{align*}
L_{\ep}\left(X^n, Y^n\right) = nH(X\triangle Y) + \sqrt{n V}
Q^{-1}(\ep) + o(\sqrt{n}),
\end{align*}
where $Q(a)$ is
the tail probability of the standard Gaussian distribution and $V$ is
the variance of the sum conditional entropy density $h(X \triangle
Y)$.
\end{roughtheorem}
While simple protocols are optimal for the first-order term for IID
observations, Example~\ref{ex:second_order_suboptimal} in
Section~\ref{s:strong_converse} exhibits the strict suboptimality of
simple protocols for the second-order term.

\section{A single-shot data exchange protocol} \label{sec:achievability}
We present a single-shot scheme for two parties to exchange random
observations $X$ and $Y$. As a preparation for our protocol, we
consider the restricted problem where only the second party observing
$Y$ seeks to know the observation $X$ of the first party. This basic
problem was introduced in the seminal work of Slepian and Wolf
\cite{SleWol73} for the case where the underlying data is IID where a
scheme with optimal rate was given.  A single-shot version of the
Slepian-Wolf scheme was given in \cite{MiyKan95} (see, also,
\cite[Lemma 7.2.1]{Han03}).  which we describe below.

Using the standard ``random binning'' and ``typical set'' decoding argument,
it follows that there exists an $l$-bit communication $\Pi_1 =
\Pi_1(X)$ and a function $\hat X$ of $(\Pi_1, Y)$ such that
\begin{align}
\bPr{X \neq \hat{X}} \le \bPr{h(X| Y) > l-\eta} + 2^{-\eta}.
\label{e:SW_error_bound}
\end{align}

In essence, the result of \cite{MiyKan95} shows that we can send $X$
to Party 2 with a probability of error less than $\ep$ using roughly as
many bits as the $\ep$-tail of $h(X|Y)$. However, the proposed scheme
uses the same number of bits for every realization of $(X,Y)$.  In
contrast, we present an interactive scheme that achieves the same goal
and uses roughly $h(X|Y)$ bits when the underlying observations are
$(X,Y)$

While the bound in \eqref{e:SW_error_bound} can be used to establish
the asymptotic rate optimality of the Slepian-Wolf scheme even for
general sources, the number of bits communicated can be reduced for
specific realizations of $X,Y$. This improvement is achieved using an
interactive protocol with an ${\rm ACK-NACK}$ feedback which halts as
soon as the second party decodes first's observation; this protocol is
described in the next subsection.  A similar scheme was introduced by
Feder and Shulman in \cite{FedS02}, a variant of which was shown to be
of least {\it average-case complexity} for stationary sources by Yang
and He in \cite{YanH10}, requiring $H(X\mid Y)$ bits on average.
Another variant of this scheme has been used recently in
\cite{HayTyaWat14ii} to generate secret keys of optimal asymptotic length
upto the second-order term.

\subsection{Interactive Slepian-Wolf Compression Protocol}
We begin with an interactive scheme for sending $X$ to an observer of
$Y$, which hashes (bins) $X$ into a few values as in the scheme of
\cite{MiyKan95}, but unlike that scheme, increases the hash-size
gradually, starting with $\lambda_1 = \lambda_{\min}$ and increasing
the size $\Delta$-bits at a time until either $X$ is recovered or
$\lambda_{\max}$ bits have been sent. After each transmission, Party 2
sends either an $\mathrm{ACK}$-$\mathrm{NACK}$ feedback signal; the
protocol stops when an $\mathrm{ACK}$ symbol is received.

As mentioned in the introduction, we rely on spectrum slicing. Our
protocol focuses on the ``essential spectrum'' of $h(X|Y)$, $i.e.$,
those values of $(X,Y)$ for which $h(X|Y) \in (\lamin, \lamax)$.  For
$\lambda_{\min}, \lambda_{\max}, \Delta > 0$ with $\lambda_{\max} >
\lambda_{\min}$, let
\begin{align} \label{eq:number-of-round}
N = \frac{\lambda_{\max} - \lambda_{\min}}{\Delta},
\end{align}
and
\begin{align}
\lambda_i = \lambda_{\min} + (i-1) \Delta,~~1 \le i \le N.
\label{e:lambda_i}
\end{align}
Further, let
\begin{align}
\cT_0 = \Big\{ (x,y) : h_{\bPP{X|Y}}(x|y) \ge \lambda_{\max} \mbox{ or
} h_{\bPP{X|Y}}(x|y) < \lambda_{\min} \Big\},
\label{e:typical}
\end{align}
and for $1 \le i \le N$, let $\cT_i$ denote the $i$th slice of the
spectrum given by
\begin{align}
\cT_i = \Big\{ (x,y) : \lambda_i \le h_{\bPP{X|Y}}(x|y) < \lambda_i +
\Delta \Big\}.
\label{e:slice_i}
\end{align}
Note that $\cT_0$ corresponds to
the complement of the ``typical set.''  Finally, let $\cH_l(\cX)$ denote
the set of all mappings $h:\cX \to \{0,1\}^l$.

Our protocol for transmitting $X$ to an observer of $Y$ is described
in Protocol \ref{p:slepian_wolf_interactive}. The lemma below bounds
the probability of error for Protocol \ref{p:slepian_wolf_interactive}
when $(x,y)\in \cT_i$, $1\le i \le N$.
\begin{protocol}[h]
\caption{Interactive Slepian-Wolf compression}
\label{p:slepian_wolf_interactive}
\KwIn{Observations $X$ and $Y$, uniform public randomness $U$, and
  parameters $l, \Delta$} \KwOut{Estimate $\hat X$ of $X$ at party 2}
Both parties use $U$ to select $h_1$ uniformly from
$\hash_{l}(\cX)$\\ Party 1 sends $\prot_1 = h_1(X)$\\ \eIf{Party 2
  finds a unique $x\in\cT_1$ with hash value $h_1(x) = \prot_1$} { set
  $\hat X = x$\newline send back $\prot_{2} = \text {ACK}$ } { send
  back $\prot_{2} = \text{NACK}$ } \While{ $2\le i\le N$ and party 2
  did not send an ACK} { Both parties use $U$ to select $h_i$
  uniformly from $\hash_{\Delta}(\cX)$, independent of $h_1, ...,
  h_{i-1}$\\ Party 1 sends $\prot_{2i-1} = h_i(X)$\\ \eIf{Party 2
    finds a unique $x\in\cT_i$ with hash value $h_j(x) =
    \prot_{2j-1},\, \forall\, 1\le j\le i$} { set $\hat X = x$\newline
    send back $\prot_{2i} = \text{ACK}$ } { \eIf{More than one such
      $x$ found} { protocol declares an error } { send back
      $\prot_{2i} = \text{NACK}$ } } Reset $i \rightarrow i+1$ }
\If{No $\hat X$ found at party 2} { Protocol declares an error }
\end{protocol}
\begin{theorem}[{\bf Interactive Slepian-Wolf}]
\label{t:slepian_wolf_interactive}
Protocol~\ref{p:slepian_wolf_interactive} with $l = \lamin+\Delta +
\eta$ sends at most $(\edxgy + \Delta + N+\eta)$ bits when the
observations are $(X,Y) \notin \cT_0$ and has probability of error
less than
\[
\bPr{\hat X \neq X} \leq \bP{XY}{\cT_0} + N2^{-\eta}.
\]
\end{theorem}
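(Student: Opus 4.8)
My plan is to analyze Protocol~\ref{p:slepian_wolf_interactive} by conditioning on the slice index, tracking separately (i) the number of bits sent and (ii) the events that can cause an error. For the bit count: if $(X,Y)\in\cT_i$ for some $1\le i\le N$, then the protocol can only reach round $i$, at which point the accumulated number of hash bits sent is $l + (i-1)\Delta = \lamin + \Delta + \eta + (i-1)\Delta = \lambda_i + \Delta + \eta$ by \eqref{e:lambda_i}. Since $(X,Y)\in\cT_i$ means $\lambda_i \le h(X|Y)$ by \eqref{e:slice_i}, we get $\lambda_i \le \edxgy$, so the hash bits total at most $\edxgy + \Delta + \eta$. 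Adding the at most $N$ feedback (ACK/NACK) bits sent by party~2 — one per round, and there are at most $N$ rounds — yields the claimed bound of $\edxgy + \Delta + N + \eta$ bits whenever $(X,Y)\notin\cT_0$ (which, by \eqref{e:typical}, partitions into $\cT_1,\dots,\cT_N$).

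For the error probability, I would first observe that the protocol declares $\hat X \neq X$ (or an explicit error) only through one of two mechanisms: either (a) $(X,Y)\in\cT_0$, in which case the true $x$ is never searched for in any slice and the protocol fails — this contributes $\bP{XY}{\cT_0}$; or (b) $(X,Y)\in\cT_i$ for some $i$, but when the protocol reaches round $i$, some \emph{other} $x'\in\cT_i$ with $x'\neq X$ collides with the true hash values, i.e.\ $h_j(x') = h_j(X)$ for all $1\le j\le i$. (Note that the true $X$, being in $\cT_i$, is always among the candidates in round $i$, so the "no $\hat X$ found" branch cannot be reached on the event $(X,Y)\in\cT_i$; the only bad event is a collision producing either a wrong unique decode or a declared-error from multiple candidates.) So I would bound
\begin{align*}
\bPr{\hat X \neq X} \le \bP{XY}{\cT_0} + \sum_{i=1}^N \bPr{(X,Y)\in\cT_i,\ \exists\, x'\neq X,\ x'\in\cT_i,\ h_j(x') = h_j(X)\ \forall j\le i}.
\end{align*}

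The main step is bounding each term in the sum. Conditioned on $(X,Y)=(x,y)\in\cT_i$, the hash functions $h_1,\dots,h_i$ are independent and uniform, with $h_1$ into $\{0,1\}^l$ and $h_2,\dots,h_i$ into $\{0,1\}^\Delta$; so for a fixed $x'\neq x$, the collision probability $\bPr{h_j(x')=h_j(x)\,\forall j\le i} = 2^{-l}\cdot 2^{-(i-1)\Delta} = 2^{-(\lambda_i + \Delta + \eta)}$ using $l = \lamin+\Delta+\eta$ and \eqref{e:lambda_i}. Applying a union bound over $x'\in\cT_i$ (with $x'$ such that $(x',y)\in\cT_i$, since the decoder only searches that slice), I need $|\{x' : (x',y)\in\cT_i\}|$. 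For any $x'$ with $(x',y)\in\cT_i$ we have $h_{\bPP{X|Y}}(x'|y) < \lambda_i + \Delta$, i.e.\ $\bP{X|Y}{x'|y} > 2^{-(\lambda_i+\Delta)}$, so there can be at most $2^{\lambda_i + \Delta}$ such $x'$. Hence each term is at most $2^{\lambda_i+\Delta}\cdot 2^{-(\lambda_i+\Delta+\eta)} = 2^{-\eta}$, and summing over $i=1,\dots,N$ gives $N2^{-\eta}$, completing the bound. The one place requiring care — the potential obstacle — is making precise that conditioning on $(X,Y)=(x,y)$ does not disturb the uniformity/independence of the public-coin hash functions (it does not, since $U$ is independent of $(X,Y)$) and that the decoder's candidate set in round $i$ is exactly $\{x' : (x',y)\in\cT_i,\ h_j(x')=\prot_{2j-1}\,\forall j\le i\}$, so that the collision event above is precisely the event of a decoding error or declared-error in that round.
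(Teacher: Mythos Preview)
Your bit-count argument is correct and matches the paper's. Your error-probability argument, however, has a genuine gap: when $(X,Y)\in\cT_i$, you claim the only way to err is a hash collision with some $x'$ in the \emph{same} slice $\cT_i$ at round $i$. But the protocol can also terminate incorrectly at an \emph{earlier} round $j<i$: since $X\notin\cT_j$, if at round $j$ there is a unique $x'$ with $(x',Y)\in\cT_j$ and $h_k(x')=h_k(X)$ for all $k\le j$, Party~2 outputs $\hat X=x'\neq X$ and sends ACK, and round $i$ is never reached. Your displayed decomposition
\[
\bPr{\hat X\neq X}\le \bP{XY}{\cT_0}+\sum_{i=1}^N\bPr{(X,Y)\in\cT_i,\ \exists\,x'\neq X:\,(x',Y)\in\cT_i,\ h_j(x')=h_j(X)\ \forall j\le i}
\]
therefore does not cover all error events on $\{(X,Y)\in\cT_i\}$, and the inequality is not justified as written.

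The paper's proof handles this via its Lemma on the performance of the protocol: conditioned on $(X,Y)=(x,y)\in\cT_i$, it union-bounds over \emph{all} rounds $j\le i$ and all $\hat x$ with $(\hat x,y)\in\cT_j$, obtaining
\[
\bPr{\hat X\neq x\mid X=x,Y=y}\le\sum_{j=1}^{i}|\{\hat x:(\hat x,y)\in\cT_j\}|\,2^{-l-(j-1)\Delta}\le i\,2^{\lamin+\Delta-l}=i\,2^{-\eta},
\]
and then bounds $i\le N$. The fix to your argument is exactly this extra sum over $j\le i$; each term is still at most $2^{-\eta}$ by the same counting you used (since $|\{\hat x:(\hat x,y)\in\cT_j\}|\le 2^{\lambda_j+\Delta}$ and $l+(j-1)\Delta=\lambda_j+\Delta+\eta$), so the final bound $\bP{XY}{\cT_0}+N2^{-\eta}$ is recovered.
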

Note that when $\cT_0$ is chosen to be of small probability,
Protocol~\ref{p:slepian_wolf_interactive} sends essentially the same
number of bits in the worst-case as the Slepian-Wolf protocol.
\subsection{Interactive protocol for data exchange}

Returning to the data exchange problem, our protocol for data exchange
builds upon Protocol~\ref{p:slepian_wolf_interactive} and uses it to
first transmit $X$ to the second party (observing $Y$).  Once Party 2
has recovered $X$ correctly, it sends $Y$ to Party 1 without error
using, say, Shannon-Fano-Elias coding (eg.~see \cite[Section
  5]{CovTho06}); the length of this second communication is $\lceil
h(Y|X) \rceil$ bits. When the accumulated number of bits communicated
in the protocol exceeds a prescribed length $l_{\max}$, the parties
abort the protocol and declare an error.\footnote{Alternatively, we
  can use the (noninteractive) Slepian-Wolf coding by setting the size
  of hash as $l_{\max} - (h(X|Y) + \Delta+N+\eta)$.}  Using
Theorem~\ref{t:slepian_wolf_interactive}, the probability of error of
the combined protocol is bounded above as follows.
\begin{theorem}[{\bf Interactive data exchange protocol}]\label{t:interactive_data_exchange} 
Given $\lambda_{\min},\lambda_{\max}, \Delta,\eta > 0$ and for $N$ in
\eqref{eq:number-of-round}, there exists a protocol for data exchange
of length $l_{\max}$ such that
\begin{align*}
\lefteqn{ \bPr{X \neq \hat{X} \mbox{ or } Y \neq \hat{Y}} } \\
&\le \bPr{\romn XY +
    \Delta + N +\eta+ 1 > l_{\max}} \\
    &~~~ + \bP{XY}{\cT_0} + N 2^{-\eta}.
\end{align*}
\end{theorem}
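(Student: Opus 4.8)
The plan is to combine Theorem~\ref{t:slepian_wolf_interactive} with the Shannon--Fano--Elias second stage and a union bound over the relevant error events. First I would fix the parameters $\lambda_{\min}, \lambda_{\max}, \Delta, \eta$ and set $l = \lambda_{\min} + \Delta + \eta$ and $N = (\lambda_{\max}-\lambda_{\min})/\Delta$ as in the statement, and describe the combined protocol precisely: Party~1 runs Protocol~\ref{p:slepian_wolf_interactive} to send $X$ to Party~2; if and when Party~2 produces an estimate $\hat X$, it then transmits $Y$ to Party~1 using a prefix-free (Shannon--Fano--Elias) code for $\bPP{Y\mid X}$ keyed on $\hat X$, which costs at most $\lceil h(Y\mid \hat X)\rceil \le h(Y\mid X) + 1$ bits \emph{on the event that $\hat X = X$}; Party~1 sets $\hat Y$ to be the decoded value. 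The abort rule is: if at any point the accumulated transcript length would exceed $l_{\max}$, both parties stop and declare an error. This makes the protocol have length exactly $l_{\max}$ by construction.

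Next I would bound the error probability by splitting into three disjoint (or at least nested) bad events. Let $E_{\mathrm{SW}}$ be the event that Protocol~\ref{p:slepian_wolf_interactive} fails to recover $X$ correctly at Party~2 (declares an error or outputs $\hat X \neq X$); by Theorem~\ref{t:slepian_wolf_interactive} this has probability at most $\bP{XY}{\cT_0} + N 2^{-\eta}$. On the complement of $E_{\mathrm{SW}}$ we have $\hat X = X$, so the second-stage Shannon--Fano--Elias code is error-free and $\hat Y = Y$; the only remaining way to fail is the abort, i.e.\ the event $E_{\mathrm{len}}$ that the total number of bits used exceeds $l_{\max}$. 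On $E_{\mathrm{SW}}^c$, Theorem~\ref{t:slepian_wolf_interactive} tells us the first stage uses at most $h(X\mid Y) + \Delta + N + \eta$ bits when $(X,Y)\notin\cT_0$ (and on $E_{\mathrm{SW}}^c$ with $\hat X = X$, one checks $(X,Y)\notin\cT_0$ holds, since a correct unique decode in some slice $\cT_i$ forces $h(X\mid Y)\in[\lambda_{\min},\lambda_{\max})$ — this is the one place requiring a small argument), plus the feedback ACK/NACK bits which I would fold into the accounting, and the second stage uses at most $h(Y\mid X) + 1$ bits. Hence on $E_{\mathrm{SW}}^c$ the protocol aborts only if $\romn XY + \Delta + N + \eta + 1 > l_{\max}$. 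A union bound then gives
\begin{align*}
\bPr{X\neq\hat X \text{ or } Y\neq\hat Y} &\le \bPr{E_{\mathrm{SW}}} + \bPr{E_{\mathrm{len}}\cap E_{\mathrm{SW}}^c} \\
&\le \bP{XY}{\cT_0} + N 2^{-\eta} + \bPr{\romn XY + \Delta + N + \eta + 1 > l_{\max}},
\end{align*}
which is exactly the claimed bound.

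The step I expect to be the main (minor) obstacle is the careful bookkeeping of the feedback bits and the abort mechanism: I need to make sure that the ACK/NACK bits Party~2 sends after each of the $N$ rounds, together with whatever framing is needed so Party~1 knows when the second stage begins, do not add a term beyond what is already absorbed into $N$ (one feedback bit per round, $N$ rounds, matching the $+N$ in the bound), and that the abort condition is stated so that on $E_{\mathrm{SW}}^c$ the protocol never aborts \emph{before} Party~2 successfully decodes — i.e.\ the length overrun event is genuinely contained in $\{\romn XY + \Delta + N + \eta + 1 > l_{\max}\}$ and not triggered spuriously mid-protocol. A clean way to handle this is to note that the cumulative length is non-decreasing in the round index, so it suffices to compare $l_{\max}$ against the \emph{total} length used in the successful run, which is the bound just derived. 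Everything else is a direct invocation of Theorem~\ref{t:slepian_wolf_interactive} and the standard Shannon--Fano--Elias length guarantee, so the proof is short.
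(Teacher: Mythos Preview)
Your proposal is correct and follows essentially the same approach as the paper: run Protocol~\ref{p:slepian_wolf_interactive} to deliver $X$, then Shannon--Fano--Elias code $Y$ conditioned on $\hat X$, abort if the total exceeds $l_{\max}$, and combine the error bound of Theorem~\ref{t:slepian_wolf_interactive} with the length accounting $\romn XY + \Delta + N + \eta + 1$. The paper's own argument is in fact even terser than yours---it simply states the protocol and invokes Theorem~\ref{t:slepian_wolf_interactive}---so your more careful bookkeeping (in particular the observation that $E_{\mathrm{SW}}^c$ forces $(X,Y)\notin\cT_0$, and the check that the $N$ feedback bits are already absorbed into the $+N$ term) only makes the argument cleaner, not different.
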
 
Thus, we attain $\ep$-DE using a protocol of length 
\[
l_{\max} = \lambda_\ep+ \Delta+N + \eta+1, 
\]
where $\lambda_{\ep}$ is the $\ep$-tail of
$\romn XY$.  Note that using the noninteractive Slepian-Wolf protocol
on both sides will require roughly as many bits of communication as
the sum of $\ep$-tails of $h(X|Y)$ and $h(Y|X)$, which, in general, is
more than the $\ep$-tail of $h(X|Y) + h(Y|X)$.
\vspace*{-0.1cm}

\subsection{Proof of Theorem~\ref{t:slepian_wolf_interactive}} 
The theorem follows as a corollary of the following observation.
\begin{lemma}[{\bf Performance of Protocol~\ref{p:slepian_wolf_interactive}}]
\label{l:slepian_wolf_interactive}
For $(x,y)\in \cT_i$, $1\le i \le N$, denoting by $\hat X = \hat
X(x,y)$ the estimate of $x$ at Party 2 at the end of the protocol
(with the convention that $\hat X = \emptyset$ if an error is
declared), Protocol~\ref{p:slepian_wolf_interactive} sends at most
$(l+(i-1)\Delta + i)$ bits and has probability of error bounded above
as follows:
\[
\bPr{\hat X \neq x \mid X=x, Y=y} \leq i2^{\la_{\min}+\Delta - l}.
\]
\end{lemma}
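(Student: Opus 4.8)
The plan is to fix a pair $(x,y)\in\cT_i$, work conditionally on $X=x,\ Y=y$, and bound the communication length and the error probability separately. Both bounds rest on one elementary observation: since $(x,y)\in\cT_i$ we have $\lambda_i\le h_{\bPP{X|Y}}(x|y)<\lambda_i+\Delta$, so the true value $x$ itself lies in the $y$-section of $\cT_i$ and trivially satisfies $h_j(x)=\prot_{2j-1}$ for every $j\le i$; hence $x$ is always among the values Party~2 finds at round $i$.

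For the length: if $i\ge2$, then regardless of the hash outcomes Party~2's list at round $i$ is nonempty (it contains $x$), so the protocol halts no later than round $i$ — with an $\mathrm{ACK}$ or an error declaration — having sent $l$ hash bits in round~$1$, $\Delta$ hash bits in each of rounds $2,\dots,i$, and one feedback bit per round, i.e.\ at most $l+(i-1)\Delta+i$ bits. For $i=1$ the same count holds whenever Party~2 decodes at round~$1$, and by the analysis below this fails only on the error event, so the bound is as claimed (on the complementary event an error is declared and fewer bits are sent).

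For the error probability, for $1\le i'\le i$ introduce the collision event $B_{i'}=\{\exists\,x'\ne x:\ (x',y)\in\cT_{i'}\ \text{and}\ h_j(x')=h_j(x)\ \forall\,1\le j\le i'\}$. The crux is the inclusion $\{\hat X\ne x\}\subseteq\bigcup_{i'=1}^{i}B_{i'}$ (conditioned on $X=x,Y=y$): on $\bigcap_{i'=1}^{i}B_{i'}^c$, disjointness of the slices gives $(x,y)\notin\cT_{i'}$ for $i'<i$, so Party~2's list is empty at each round $i'<i$ — she sends a $\mathrm{NACK}$ and continues — while at round $i$ her list is exactly $\{x\}$, so she outputs $\hat X=x$. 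Complementing and applying a union bound yields $\bPr{\hat X\ne x\mid X=x,Y=y}\le\sum_{i'=1}^{i}\bPr{B_{i'}}$.

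Finally I would estimate $\bPr{B_{i'}}$ by a union bound over the admissible $x'$. Since $h_1,\dots,h_{i'}$ are drawn from the public coins, they are independent of $(X,Y)$ and mutually independent, uniform over $\hash_{l}(\cX)$ for $h_1$ and over $\hash_{\Delta}(\cX)$ for the rest; so for a fixed $x'\ne x$, $\bPr{h_j(x')=h_j(x)\ \forall\,j\le i'}=2^{-l}2^{-(i'-1)\Delta}$. The number of $x'$ with $(x',y)\in\cT_{i'}$ is at most $2^{\lambda_{i'}+\Delta}$, since each such $x'$ has $\bP{X|Y}{x'|y}>2^{-(\lambda_{i'}+\Delta)}$ and these conditional probabilities sum to at most $1$. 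Substituting $\lambda_{i'}=\lambda_{\min}+(i'-1)\Delta$ collapses the exponent to $\lambda_{\min}+\Delta-l$, giving $\bPr{B_{i'}}\le2^{\lambda_{\min}+\Delta-l}$ for every $i'$, and hence $\bPr{\hat X\ne x\mid X=x,Y=y}\le i\,2^{\lambda_{\min}+\Delta-l}$. The one genuinely delicate step is the inclusion $\{\hat X\ne x\}\subseteq\bigcup_{i'}B_{i'}$: it requires carefully following the protocol's branching (unique / multiple / no candidate, and the slightly different round-$1$ rule) to be sure that, absent any hash collision in the first $i$ rounds, Party~2 neither halts early with a wrong estimate nor overshoots round $i$; everything else is routine counting and union bounds.
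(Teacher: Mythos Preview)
Your argument is essentially the paper's: a union bound over rounds $i'\le i$ and over colliding candidates in each slice $\cT_{i'}$, the slice-cardinality estimate $|\{x':(x',y)\in\cT_{i'}\}|\le 2^{\lambda_{i'}+\Delta}$, and the telescoping $\lambda_{i'}=\lambda_{\min}+(i'-1)\Delta$. Your parenthetical handling of the $i=1$ length bound is not quite right---on a round-$1$ collision Party~2 sends $\mathrm{NACK}$ (not an error) and the protocol may continue well past round $1$---but the paper's own proof glosses over the very same point, and it affects neither the error-probability bound nor any downstream use of the lemma.
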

{\it Proof.}  Since $(x,y)\in \cT_i$, an error occurs if there exists
a $\hat{x}\neq x$ such that $(\hat x,y)\in \cT_j$ and $\prot_{2k-1} =
h_{2k-1}(\hat x)$ for $1 \leq k \leq j$ for some $j\leq i$. Therefore,
the probability of error is bounded above as
\begin{align*}
& \bPr{\hat X \neq x \mid X=x, Y=y} \\ &\leq \sum_{j=1}^{i}\sum_{\hat
    x\neq x} \bPr{h_{2k-1}(x) = h_{2k-1}(\hat x),\, \forall\, 1\leq k
    \leq j} \\
&~~~\times    \mathbbm{1}\big((\hat x, y)\in \cT_j\big) \\
    &\leq
  \sum_{j=1}^{i}\sum_{\hat x\neq x}
  \frac{1}{2^{l+(j-1)\Delta}}\mathbbm{1}\big((\hat x, y)\in \cT_j\big) \\
  &= \sum_{j=1}^{i}\sum_{\hat x\neq x}\frac{1}{2^{l+(j-1)\Delta}}
  |\{\hat x \mid (\hat x, y)\in \cT_j\}| \\&\leq i2^{\lamin - l
    +\Delta},
\end{align*}
where we have used the fact that $\log |\{\hat x \mid (\hat x, y)\in
\cT_j\}| \leq \la_j+\Delta$.  Note that the protocol sends $l$ bits in
the first transmission, and $\Delta$ bits and $1$-bit feedback in
every subsequent transmission.  Therefore, no more than
$(l+(i-1)\Delta + i)$ bits are sent.  \qed

\section{Converse bound} \label{sec:converse}
Our converse bound, while heuristically simple, is technically
involved.  We first state the formal statement and provide the high level
ideas underlying the proof; the formal proof will be provided later.

Our converse proof, too, relies on spectrum slicing to find the part
of the spectrum of $\bPP{XY}$ where the protocol communicates large
number of bits. As in the achievability part, we shall focus on the
``essential spectrum'' of $h(XY)$.

Given $\lamax$, $\lamin$, and $\Delta>0$, let $N$ be as in
\eqref{eq:number-of-round} and the set $\cT_0$ be as in
\eqref{e:typical}, with $h_{\bPP{X|Y}}(x|y)$ replaced by
$h_{\bPP{XY}}(xy)$ in those definitions.

\begin{theorem}\label{t:converse_general1}
For $0\le \ep <1$, $0< \eta < 1-\ep$, and parameters $\Delta, N$ as
above, the following lower bound on $L_\ep(X,Y)$ holds for every
$\gamma>0$:
\begin{align*}
L_\ep(X,Y) &\ge \gamma + 3\log\left(\bPP \gamma- \ep - \bP{XY}{\cT_0} -
\frac 1N\right)_+ \\
&~~~+\log(1-2\eta)-\Delta -6\log N - 4\log\frac 1
      {\eta}-1,
\end{align*}
where $\bPP \gamma := \bP{XY}{\romn{X}{Y} > \gamma}$.
\end{theorem}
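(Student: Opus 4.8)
\emph{Proof strategy (sketch).} The plan is to turn the data-exchange protocol into a secret-key agreement protocol: after $\pi$ has run, both parties hold (an estimate of) $(X,Y)$, a universal hash of which is a key nearly independent of the transcript, so $|\pi|$ is forced to be at least, roughly, the amount of extractable common randomness minus the secret-key capacity; the latter is then capped from above by the conditional independence testing bound of \cite{TyaWat14,TyaWat14ii}, and spectrum slicing is the device that localises the first estimate so that it becomes tight realisation by realisation. Concretely, by the derandomization remark I would fix an optimal $\ep$-DE protocol $\pi$ that is deterministic, so that $\Pi=\pi(X,Y)$ is a function of $(X,Y)$ with $|\mathrm{range}(\Pi)|\le 2^{L+1}$, $L:=L_\ep(X,Y)=|\pi|$, and on an event $\cE^c$ with $\bPr{\cE^c}\ge 1-\ep$ both parties recover $(X,Y)$; thus $(X,Y)$ is recoverable common randomness given $\Pi$.

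\textbf{Isolating a slice.} Using the slices $\cT_0,\cT_1,\dots,\cT_N$ of $h(XY)$ from the statement, I would look at $\cG:=\{\romn{X}{Y}>\gamma\}\cap\cT_0^c\cap\cE^c$, which has $\bPr\cG\ge\bPP\gamma-\bP{XY}{\cT_0}-\ep$; since $\cG$ is covered by the $N$ sets $\cG\cap\cT_k$, by pigeonhole (after discarding slices of negligible mass, costing at most $1/N$) some $k$ has $p_k:=\bPr{\cG\cap\cT_k}\ge N^{-1}(\bPP\gamma-\bP{XY}{\cT_0}-\ep-N^{-1})_+$. Put $\cA:=\cG\cap\cT_k$. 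On $\cA$ we have $h(XY)\in[\lambda_k,\lambda_k+\Delta)$ and, by the identity $\log\tfrac{\bP{XY}{x,y}}{\bP X x\,\bP Y y}=h(xy)-\romn{x}{y}$, the information density is below $\lambda_k+\Delta-\gamma$; moreover, conditioning $(X,Y)$ on $\cA$ still leaves a protocol of length $\le L$ with $|\mathrm{range}(\Pi)|\le 2^{L+1}$, now with \emph{zero} error.

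\textbf{From data exchange to a secret key.} Since $h(XY)\ge\lambda_k$ on $\cA$, the conditioned source has min-entropy $\ge\lambda_k-\log\tfrac1{p_k}$, and the chain rule $\tilde H_\infty(XY\mid\Pi)\ge\tilde H_\infty(XY)-\log|\mathrm{range}(\Pi)|$ leaves $\ge\lambda_k-\log\tfrac1{p_k}-L-1$ bits given the transcript. Feeding the recovered $(X,Y)$ through a universal hash drawn from the public coins and invoking the leftover hash lemma \cite{ImpLevLub89,RenWol05}, the conditioned source acquires an $\eta$-secret key of length $\ge\lambda_k-\log\tfrac1{p_k}-L-1-2\log\tfrac1\eta$. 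On the other hand, the conditional independence testing bound \cite{TyaWat14,TyaWat14ii} caps this length by $-\log\beta_{1-\eta-\eta'}\!\big(\bPP{XY}(\cdot\mid\cA)\,\big\|\,\bPP X(\cdot\mid\cA)\otimes\bPP Y(\cdot\mid\cA)\big)+2\log\tfrac1{\eta'}$; bounding the testing term by a Neyman--Pearson test at threshold $\tau$ makes it $\le\tau+\log\tfrac1\mu$ as soon as $\tau$ exceeds the $(1-\eta-\eta'-\mu)$-quantile of the conditional information density. Because on $\cA$ the \emph{unconditional} density is below $\lambda_k+\Delta-\gamma$, while $\bPP X(\cdot\mid\cA)\le\bP X\cdot/p_k$ and $\bPP Y(\cdot\mid\cA)\le\bP Y\cdot/p_k$ with large deviations of these ratios exponentially rare (a Markov-type bound), the conditional density is below $\lambda_k+\Delta-\gamma+\log\tfrac1{p_k}+O(1)$ off a set of small probability, so $\tau=\lambda_k+\Delta-\gamma+\log\tfrac1{p_k}+O(1)$ is admissible.

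\textbf{Combining, and the main obstacle.} Chaining the lower and upper bounds on the conditioned secret-key length and solving for $L$ gives $L\ge\gamma-\Delta-2\log\tfrac1{p_k}-\log\tfrac1\mu-4\log\tfrac1\eta-O(1)$; substituting $p_k\ge N^{-1}(\bPP\gamma-\bP{XY}{\cT_0}-\ep-N^{-1})_+$ and making the $\eta$-dependent choices of $\mu$, $\eta'$ and of the $O(1)$ constants — which is precisely what produces the $\log(1-2\eta)$ term and promotes some of the $\log N$ and $\log(\cdot)_+$ factors to the multiplicities $6$ and $3$ in the statement — yields the displayed inequality. The step I expect to be the real work is the control of the testing term: that bound is phrased against the product of the \emph{conditioned} marginals, whereas on $\cA$ one only controls the \emph{unconditional} density $h(XY)-\romn{X}{Y}$, and keeping the gap between the two uniformly at $O(\log N)$ while simultaneously tracking all of the smoothing parameters (so the constants land exactly as written) is the delicate part; the leftover hash step, the pigeonhole argument, and the min-entropy chain rule are routine.
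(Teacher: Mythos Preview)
Your overall strategy --- slice the spectrum of $h(XY)$, condition on a good slice, extract a secret key via leftover hash, and cap its length by the conditional independence testing bound --- is exactly the paper's. Where you diverge, and where you correctly sense trouble, is the testing step: you instantiate Theorem~\ref{theorem:one-shot-converse-source-model} with the \emph{conditioned} marginals $\bPP{X}(\cdot\mid\cA)\otimes\bPP{Y}(\cdot\mid\cA)$ and must then upper-bound the resulting conditional information density. Note that the inequality you invoke, $\bPP{X\mid\cA}\le\bPP X/p_k$, points the wrong way for this purpose (it \emph{lower}-bounds the density); what you actually need is a high-probability \emph{lower} bound on $\bPP{X\mid\cA}(X)/\bPP X(X)$, which can be obtained by a separate Markov argument but at the cost of additional slack and bookkeeping that do not obviously land on the stated constants.

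The paper avoids all of this. Since Theorem~\ref{theorem:one-shot-converse-source-model} holds for \emph{arbitrary} auxiliary measures $\bQQ X,\bQQ Y$, the paper simply takes $\bQQ X=\bPP X$ and $\bQQ Y=\bPP Y$, the \emph{unconditional} marginals. One then only has to control $\log\frac{\bPP{XY\mid\cE_j}(X,Y)}{\bPP X(X)\bPP Y(Y)}$, and the pointwise bound $\bPP{XY\mid\cE_j}\le(N^2/\alpha)\,\bPP{XY}$ reduces this to $\log\frac{\bPP{XY}(X,Y)}{\bPP X(X)\bPP Y(Y)}=h(XY)-\romn XY$, which on $\cE_j\subset\cT_j\cap\cE_\gamma$ is below $\lambda_j+\Delta-\gamma$ deterministically. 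Your ``main obstacle'' disappears, and the constants $3$, $6$, and the $\log(1-2\eta)$ term drop out mechanically from applying Theorem~\ref{t:converse_AU} to $\bPP{XY\mid\cE_j}$ with margin $\Delta-\log\alpha+2\log N$ and zero recovery error.

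A related structural difference: the paper's slice selection (Lemma~\ref{l:good_index}) does not merely pigeonhole on $\bPr{\cG\cap\cT_j}$; it secures both $\bPr{J=j}>1/N^2$ and $\bPr{\cE_j\mid J=j}\ge\alpha$ separately. The first of these is what delivers the clean pointwise inflation factor $N^2/\alpha$ used above, and the second is what makes the probability inside $(\cdot)_+$ equal $\alpha$ rather than $\alpha/N$; together they are exactly what produces the multiplicities $3$ and $6$ in the statement.
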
 
Thus, a protocol attaining $\ep$-DE must communicate roughly as many
bits as $\ep$-tail of $\romn XY$.

The main idea is to relate data exchange to secret key agreement,
which is done in the following two steps:
\begin{enumerate}
\item Given a protocol $\pi$ for $\ep$-DE of length $l$, use the
  leftover hash lemma to extract an $\ep$-secret key of length roughly
  $\lamin - l$.

\item The length of the secret key that has been generated is bounded
  above by $S_\ep(X,Y)$, the maximum possible length of an
  $\ep$-secret key. Use the conditional independence testing bound in
  \cite{TyaWat14, TyaWat14ii} to further upper bound $S_\ep(X,Y)$,
  thereby obtaining a lower bound for $l$.
\end{enumerate}
This approach leads to a loss of $\lamax- \lamin$, the length of the
spectrum of $\bPP{XY}$.  However, since we are lower bounding the
worse-case communication complexity, we can divide the spectrum into
small slices of length $\Delta$, and show that there is a slice where
the communication is high enough by applying the steps above to the
conditional distribution given that $(X,Y)$ lie in a given slice. This
reduces the loss from $\lamax-\lamin$ to $\Delta$.

\subsection{Review of two party secret key agreement}\label{s:secret_keys}
Consider two parties with the first and the second party,
respectively, observing the random variable $X$ and $Y$.  Using an
interactive protocol $\Pi$ and their local observations, the parties
agree on a secret key. A random variable $K$ constitutes a secret key
if the two parties form estimates that agree with $K$ with probability
close to $1$ and $K$ is concealed, in effect, from an eavesdropper
with access to communication $\Pi$.  Formally, let $K_x$ and $K_y$,
respectively, be randomized functions of $(X, \Pi)$ and $(Y,
\Pi)$. Such random variables $K_x$ and $K_y$ with common range $\cK$
constitute an {\it$\ep$-secret key} ($\ep$-SK) if the following
condition is satisfied:
\begin{eqnarray}
\frac{1}{2}\left\| \bPP{K_xK_y\Pi} -
\mathrm{P}_{\mathtt{unif}}^{(2)}\times \bPP{\Pi}\right\| &\leq \ep,
\nonumber
\end{eqnarray}
where
\begin{eqnarray*}
\mathrm{P}_{\mathtt{unif}}^{(2)}\left(k_x, k_y\right) =
\frac{\mathbbm{1}(k_x= k_y)}{|\cK|},
\end{eqnarray*}
and $\| \cdot \|$ is the variational distance.  The condition above
ensures both reliable {\it recovery}, requiring $\bPr{K_x \neq K_y}$ to be
small, and information theoretic {\it secrecy}, requiring the
distribution of $K_x$ (or $K_y$) to be almost independent of the
communication $\Pi$ and to be almost uniform.
See \cite{TyaWat14} for a discussion on connections between the
combined condition above and the usual separate conditions for
recovery and secrecy.

\begin{definition}
Given $0\le \ep < 1$, the supremum over lengths $\log|\cK|$ of an
$\ep$-SK is denoted by $S_\ep(X, Y)$.
\end{definition}

A key tool for generating secret keys is the {\it leftover hash lemma}
\cite{ImpLevLub89, RenWol05} which, given a random variable $X$ and an
$l$-bit eavesdropper's observation $Z$, allows us to extract roughly
$H_{\min}(\bPP X) - l$ bits of uniform bits, independent of $Z$.  Here
$H_{\min}$ denotes the {\it min-entropy} and is given by
\[
H_{\min}\left(\bPP X\right) = \sup_x \log \frac{1}{\bP X x}.
\]
Formally, let $\cF$ be a {\it $2$-universal family} of mappings $f:
\cX\rightarrow \cK$, $i.e.$, for each $x'\neq x$, the family $\cF$
satisfies
\[
\frac{1}{|\cF|} \sum_{f\in \cF} \mathbbm{1}(f(x) = f(x')) \leq
\frac{1}{|\cK|}.
\]

\begin{lemma}[{\bf Leftover Hash}]\label{l:leftover_hash} Consider
  random variables $X$ and $Z$ taking values in a
countable set $\cX$ and a finite set $\cZ$, respectively.  Let $S$ be a
random seed such that $f_S$ is uniformly distributed over a
$2$-universal family $\cF$.  Then, for $K = f_S(X)$
\begin{align*}
\ttlvrn{\bPP{KZS}}{\bPP{\mathtt{unif}}\bPP{ZS}} \leq
\sqrt{|\cK||\cZ|2^{- H_{\min} \left(\bPP{X}\right)}},
\end{align*}
where $\bPP{\mathtt{unif}}$ is the uniform distribution on $\cK$.
\end{lemma}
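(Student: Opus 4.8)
The statement to be proved is the bound
\[
\ttlvrn{\bPP{KZS}}{\bPP{\mathtt{unif}}\bPP{ZS}} \leq
\sqrt{|\cK||\cZ|2^{- H_{\min}(\bPP{X})}}
\]
for $K = f_S(X)$ with $f_S$ uniform over a $2$-universal family $\cF$. The plan is to run the classical two-universal hashing argument: first reduce the variational (\(L_1\)) distance to an \(L_2\)-type collision quantity via Cauchy--Schwarz, then exploit the defining property of a $2$-universal family to bound that collision probability by $1/|\cK|$ plus a term controlled by the maximal point mass of $\bPP{X}$, i.e.\ by $2^{-H_{\min}(\bPP X)}$.

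More concretely, I would proceed as follows. First I would fix $z \in \cZ$ and a value $s$ of the seed, and write the conditional distribution $\bPP{K \mid Z = z, S = s}(k) = \sum_{x : f_s(x) = k}\bPP{X\mid Z=z}(x)$. Applying Cauchy--Schwarz on the set $\cK$, the \(L_1\) distance between $\bPP{K\mid Z=z,S=s}$ and the uniform distribution is at most $\sqrt{|\cK|}$ times the \(L_2\) distance, which in turn is bounded by $\sqrt{|\cK|\sum_k \bPP{K\mid Z=z,S=s}(k)^2 - 1}$. Then I would average over $S$ (using Jensen's inequality to pull the average inside the square root) and compute $\bEE_S\big[\sum_k \bPP{K\mid Z=z,S=s}(k)^2\big] = \sum_{x,x'}\bPP{X\mid Z=z}(x)\bPP{X\mid Z=z}(x')\,\bPr{f_S(x) = f_S(x')}$. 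Splitting the diagonal $x = x'$ from the off-diagonal and invoking $2$-universality ($\bPr{f_S(x)=f_S(x')}\le 1/|\cK|$ for $x\ne x'$), this sum is at most $1/|\cK| + \sum_x \bPP{X\mid Z=z}(x)^2 \le 1/|\cK| + 2^{-H_{\min}(\bPP{X\mid Z=z})}$. Finally I would assemble the pieces: bound $\sum_x\bPP{X\mid Z=z}(x)^2$ by $\max_x \bPP{X\mid Z=z}(x)$, relate the conditional min-entropy to the unconditional one through $\bPP{X\mid Z=z}(x) \le \bPP{X}(x)/\bPP{Z}(z)$, and sum the per-$z$ bound over $z$ weighted by $\bPP{Z}(z)$, again using Cauchy--Schwarz (or concavity of the square root) to handle the sum over $\cZ$ and producing the factor $|\cZ|$ under the root.

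The one genuinely delicate point — and the step I expect to require the most care — is the bookkeeping that turns the \emph{conditional} collision bound into the \emph{unconditional} min-entropy $H_{\min}(\bPP X)$ while simultaneously extracting the factor $|\cZ|$ rather than something worse. The clean way is to carry the joint distribution $\bPP{XZ}$ throughout (instead of conditioning), so that the averaged collision sum becomes $\sum_z \sum_{x,x'}\bPP{XZ}(x,z)\bPP{XZ}(x',z)\bPr{f_S(x)=f_S(x')}$, bound the off-diagonal by $\tfrac1{|\cK|}(\sum_{x,z}\bPP{XZ}(x,z))\cdot$ (a sum over $z$ of conditional masses) and the diagonal by $\sum_{x,z}\bPP{XZ}(x,z)^2 \le 2^{-H_{\min}(\bPP X)}\sum_{x,z}\mathbbm 1(\bPP{XZ}(x,z)>0)\cdot(\cdots)$ — here one must be careful that $\bPP{XZ}(x,z)\le \bPP X(x) \le 2^{-H_{\min}(\bPP X)}$, so $\sum_{x,z}\bPP{XZ}(x,z)^2 \le 2^{-H_{\min}(\bPP X)}$. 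Combining with the Cauchy--Schwarz step over the product alphabet $\cK\times\cZ$ (which is where the $|\cK||\cZ|$ factor appears) and taking square roots yields exactly the claimed bound. The rest is routine; no new ideas beyond standard $2$-universal hashing are needed, and all the tools invoked (Cauchy--Schwarz, Jensen, the definition of $2$-universality, the definition of $H_{\min}$) are already available.
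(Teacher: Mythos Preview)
Your plan is correct and is precisely the standard collision-probability argument for the leftover hash lemma: Cauchy--Schwarz over $\cK\times\cZ$ to pass from $L_1$ to $L_2$, Jensen to average the seed inside the square root, $2$-universality to bound the off-diagonal collision term by $1/|\cK|$ (which then cancels against the cross term in the $L_2$ expansion), and the observation $\sum_{x,z}\bPP{XZ}(x,z)^2\le \max_x\bPP X(x)=2^{-H_{\min}(\bPP X)}$ for the diagonal. The paper itself does not give a proof of this lemma; it simply states that the result is a straightforward modification of the leftover hash lemma in the cited references (Renner's thesis and Appendix~B of \cite{HayTyaWat14ii}), and your ``clean way'' is exactly the derivation one finds there.
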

The version above is a straightforward modification of the leftover
hash lemma in, for instance, \cite{Ren05} and can be derived in a
similar manner (see Appendix B of \cite{HayTyaWat14ii}).

Next, we recall the {\it conditional independence testing} upper bound
on $S_{\ep}(X, Y)$, which was established in \cite{TyaWat14,
  TyaWat14ii}. In fact, the general upper bound in \cite{TyaWat14,
  TyaWat14ii} is a single-shot upper bound on the secret key length
for a multiparty secret key agreement problem with side information at
the eavesdropper. Below, we recall a specialization of the general
result for the two party case with no side information at the
eavesdropper. In order to state the result, we need the following
concept from binary hypothesis testing.

Consider a binary hypothesis testing problem with null hypothesis
$\mathrm{P}$ and alternative hypothesis $\mathrm{Q}$, where
$\mathrm{P}$ and $\mathrm{Q}$ are distributions on the same alphabet
${\cal V}$. Upon observing a value $v\in \cV$, the observer needs to
decide if the value was generated by the distribution $\bPP{}$ or the
distribution $\mathrm{Q}$. To this end, the observer applies a
stochastic test $\mathrm{T}$, which is a conditional distribution on
$\{0,1\}$ given an observation $v\in \cV$. When $v\in \cV$ is
observed, the test $\mathrm{T}$ chooses the null hypothesis with
probability $\mathrm{T}(0|v)$ and the alternative hypothesis with
probability $T(1|v) = 1 - T(0|v)$.  For $0\leq \ep<1$, denote by
$\beta_\ep(\mathrm{P},\mathrm{Q})$ the infimum of the probability of
error of type II given that the probability of error of type I is less
than $\ep$, \ie,
\begin{eqnarray}
\beta_\ep(\mathrm{P},\mathrm{Q}) := \inf_{\mathrm{T}\, :\,
  \mathrm{P}[\mathrm{T}] \ge 1 - \ep} \mathrm{Q}[\mathrm{T}],
\label{e:beta-epsilon}
\end{eqnarray}
where
\begin{eqnarray*}
\mathrm{P}[\mathrm{T}] &=& \sum_v \mathrm{P}(v) \mathrm{T}(0|v),
\\ \mathrm{Q}[\mathrm{T}] &=& \sum_v \mathrm{Q}(v) \mathrm{T}(0|v).
\end{eqnarray*}

The following upper bound for $S_\ep(X,Y)$ was established in
\cite{TyaWat14, TyaWat14ii}.
\begin{theorem}[{\bf Conditional independence testing bound}] \label{theorem:one-shot-converse-source-model}
Given $0\leq \ep <1$, $0<\eta<1-\ep$, the following bound holds:
\begin{eqnarray}
S_{\ep}\left(X, Y\right) \le -\log
\beta_{\ep+\eta}\big(\bPP{XY},\mathrm{Q}_{X}\mathrm{Q}_{Y}\big) + 2
\log(1/\eta), \nonumber
\end{eqnarray}
for all distributions $\bQQ X$ and $\bQQ Y$ on on $\cX$ and $\cY$,
respectively.
\end{theorem}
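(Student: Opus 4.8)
\emph{Proof plan.} This is the conditional independence testing bound of \cite{TyaWat14, TyaWat14ii}, quoted here as a black box; the reduction behind it is as follows. An $\ep$-SK of length $\log|\cK|$ is used to build a good statistical test for the null hypothesis $\bPP{XY}$ against the alternative $\dQ_X\dQ_Y$, so that a large $|\cK|$ forces the optimal type-II error $\beta_{\ep+\eta}(\bPP{XY},\dQ_X\dQ_Y)$ to be small; rearranging then yields the claimed upper bound on $S_\ep(X,Y)$.

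The single structural input is the \emph{rectangle (Markov) property of interactive protocols}: for any fixed values of the private and public coins, the transcript $\Pi$ of a two-party protocol satisfies $X \mc \Pi \mc Y$ whenever $X$ and $Y$ are independent. Hence under $\dQ_X\dQ_Y$ the estimates $K_x$ and $K_y$ --- functions of $(X,\Pi)$ and $(Y,\Pi)$ respectively --- are conditionally independent given $\Pi$, whereas under $\bPP{XY}$ they agree with probability at least $1-\ep$ and $K_x$ is essentially uniform and independent of $\Pi$ by the $\ep$-SK condition.

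The plan is as follows. First I would fix an $\ep$-SK attaining $\log|\cK|$ and define a stochastic test $\mathrm{T}$ for $\bPP{XY}$ versus $\dQ_X\dQ_Y$: on input $(x,y)$, simulate the protocol (generating all coins internally) to produce the transcript $\pi$ and the estimates $k_x, k_y$, and accept the null iff $k_x = k_y$ \emph{and} the pair $(k_x,\pi)$ lies in a ``near-uniform'' set $\cT^*$ carved out via $\bPP{K_x|\Pi}$, so that on $\cT^*$ the conditional probability of any key value is of order $1/|\cK|$. Next I would bound the type-I error: the event $\{K_x\neq K_y\}$ costs at most $\ep$ since the variational-distance condition defining an $\ep$-SK dominates $\bPr{K_x\neq K_y}$, and the event $\{(K_x,\Pi)\notin\cT^*\}$ costs at most $\eta$ by the secrecy half of the $\ep$-SK condition together with a Markov-type estimate (constants chosen so the total is $\ep+\eta$), giving $\bPP{XY}[\mathrm{T}]\ge 1-\ep-\eta$. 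Then I would bound the type-II error: under $\dQ_X\dQ_Y$, conditional independence of $K_x, K_y$ given $\Pi$ rewrites the acceptance probability as $\sum_{(k,\pi)\in\cT^*}\dQ_\Pi(\pi)\,\dQ(K_x=k\mid\pi)\,\dQ(K_y=k\mid\pi)$, and a change of measure from $\dQ$ to $\bPP{}$ on the conditionals, combined with the defining inequality of $\cT^*$, caps this by $\eta^{-2}|\cK|^{-1}$. Finally, $\beta_{\ep+\eta}(\bPP{XY},\dQ_X\dQ_Y)\le (\dQ_X\dQ_Y)[\mathrm{T}]\le \eta^{-2}|\cK|^{-1}$, so $\log|\cK|\le -\log\beta_{\ep+\eta}(\bPP{XY},\dQ_X\dQ_Y)+2\log(1/\eta)$; taking the supremum over all $\ep$-SKs gives the stated bound on $S_\ep(X,Y)$.

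The main obstacle is the type-II step --- the choice of $\cT^*$ and the two-sided bookkeeping around it. The set must be large enough under $\bPP{}$ that the secrecy slack costs only $\eta$ in type-I error, yet rigid enough that after passing to the product measure $\dQ_X\dQ_Y$, under which the key may be badly non-uniform, the rectangle property still pins the collision probability to order $1/|\cK|$. Reconciling these two demands, with all the loss absorbed into the single parameter $\eta$ (which is what produces the additive $2\log(1/\eta)$), is the delicate heart of the argument; the remaining steps are routine.
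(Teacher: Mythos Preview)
The paper does not prove this theorem; it is quoted verbatim from \cite{TyaWat14, TyaWat14ii} as a prerequisite tool, with no proof or sketch given here. You correctly identify this at the outset (``quoted here as a black box''), so there is nothing in the present paper to compare against.

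That said, your sketch of the underlying argument from \cite{TyaWat14, TyaWat14ii} is accurate in spirit: the reduction does build a likelihood test from the secret key, the rectangle/Markov property of interactive transcripts is exactly what forces conditional independence of $K_x$ and $K_y$ given $\Pi$ under any product measure $\dQ_X\dQ_Y$, and the $2\log(1/\eta)$ term does arise from carving out a near-uniform set for $(K_x,\Pi)$ and then controlling both error types through the single slack parameter $\eta$. The one place your description is slightly loose is the change-of-measure step in the type-II bound: in the original argument the conditionals $\dQ(K_x=k\mid\pi)$ and $\dQ(K_y=k\mid\pi)$ are not passed back to $\bPP{}$ per se; rather, one of the two factors is bounded by $1$ and the other is controlled directly via the near-uniformity set, which is defined using the true distribution $\bPP{K_x\mid\Pi}$ but applied under $\dQ$ thanks to the fact that the conditional of $K_x$ given $\Pi$ and the local randomness depends only on the marginal of $X$ (and similarly for $K_y$). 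This is the bookkeeping you flag as the ``main obstacle,'' and your instinct that it is the delicate step is right; but since the present paper treats the whole theorem as an imported result, no further detail is required here.
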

We close by noting a further upper bound for $\beta_\ep(\bPP{},
\bQQ{})$, which is easy to derive ($cf.$~\cite{Pol10}).
\begin{lemma}\label{l:bound_beta_epsilon}
For every $0\leq \ep \leq 1$ and $\lambda$,
\[
-\log \beta_\ep(\bPP{}, \bQQ{}) \leq \lambda -
\log\left(\mathrm{P}\left(\log\frac{ \bP {} X}{\bQ{} X} <
\lambda\right) - \ep\right)_+,
\]
where $(x)_+ = \max\{0,x\}$.
\end{lemma}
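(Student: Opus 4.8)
The plan is to establish the equivalent lower bound
\[
\beta_\ep(\mathrm{P},\mathrm{Q}) \;\ge\; 2^{-\lambda}\Big(\mathrm{P}\big(\log\tfrac{\mathrm{P}(X)}{\mathrm{Q}(X)} < \lambda\big) - \ep\Big)_+
\]
and then take $-\log$ of both sides, using that $-\log(\cdot)$ is decreasing and reading $-\log 0 = +\infty$ (so the claimed inequality is vacuously true whenever the positive part above vanishes). Recall that $\beta_\ep(\mathrm{P},\mathrm{Q})$ is the infimum of $\mathrm{Q}[\mathrm{T}]$ over all stochastic tests $\mathrm{T}$ with $\mathrm{P}[\mathrm{T}]\ge 1-\ep$; hence it suffices to fix one such admissible $\mathrm{T}$ and bound $\mathrm{Q}[\mathrm{T}]$ from below by the right-hand side.

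First I would introduce the ``typical'' set $A := \{v\in\cV : \log(\mathrm{P}(v)/\mathrm{Q}(v)) < \lambda\}$, on which $\mathrm{Q}(v) \ge 2^{-\lambda}\mathrm{P}(v)$ holds pointwise (this is automatic for $v\in A$ with $\mathrm{P}(v)=0$, while every $v$ with $\mathrm{Q}(v)=0$ lies outside $A$). Restricting the sum defining $\mathrm{Q}[\mathrm{T}]$ to $A$ and applying this pointwise bound gives
\[
\mathrm{Q}[\mathrm{T}] \;\ge\; \sum_{v\in A}\mathrm{Q}(v)\mathrm{T}(0|v) \;\ge\; 2^{-\lambda}\sum_{v\in A}\mathrm{P}(v)\mathrm{T}(0|v).
\]
Then I would control the last sum by writing it as $\mathrm{P}[\mathrm{T}] - \sum_{v\notin A}\mathrm{P}(v)\mathrm{T}(0|v)$ and using $0\le\mathrm{T}(0|v)\le 1$ together with $\mathrm{P}[\mathrm{T}]\ge 1-\ep$:
\[
\sum_{v\in A}\mathrm{P}(v)\mathrm{T}(0|v) \;\ge\; \mathrm{P}[\mathrm{T}] - \mathrm{P}(A^c) \;\ge\; (1-\ep) - (1 - \mathrm{P}(A)) \;=\; \mathrm{P}(A) - \ep .
\]

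Combining the two displays yields $\mathrm{Q}[\mathrm{T}] \ge 2^{-\lambda}(\mathrm{P}(A)-\ep)$, and since $\mathrm{Q}[\mathrm{T}]\ge 0$ trivially, $\mathrm{P}(A)-\ep$ may be replaced by its positive part. Taking the infimum over admissible tests $\mathrm{T}$, recalling that $\mathrm{P}(A) = \mathrm{P}(\log(\mathrm{P}(X)/\mathrm{Q}(X)) < \lambda)$, and taking $-\log$ then gives the lemma. This is the standard one-sided threshold (``converse'') bound on $\beta_\ep$, so there is no genuine obstacle; the only points worth a word of care are the degenerate alphabet points where $\mathrm{P}(v)$ or $\mathrm{Q}(v)$ vanishes, which are handled as noted above, and the boundary case $(\mathrm{P}(A)-\ep)_+=0$, in which the asserted bound holds vacuously.
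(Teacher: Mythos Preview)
Your argument is correct and is precisely the standard data-processing/threshold bound on $\beta_\ep$ that the paper has in mind; the paper itself does not spell out a proof, noting only that the lemma ``is easy to derive ($cf.$~\cite{Pol10}).'' There is nothing to add or compare.
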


\subsection{Converse bound for almost uniform distribution}

First, we consider a converse bound under the almost uniformity
assumption.  Suppose that there exist $\lamin$ and $\lamax$ such that
\begin{align}
&\lamin \le -\log \bPP {XY}(x,y) \le \lamax,
\nonumber
\\
&\hspace{4cm} \forall (x,y) \in
\mathrm{supp}(\bPP{XY}),
\label{e:uniformity_assumption}
\end{align}
where $\mathrm{supp}(\bPP{XY})$ denotes the support of $\bPP{XY}$. We
call such a distribution $\bPP{XY}$ an almost uniform distribution
with margin $\Delta=(\lamax- \lamin)$.

\begin{theorem}\label{t:converse_AU}
Let $\bPP{XY}$ be almost uniform with margin $\Delta$. 
Given $0\le \ep <1$, for every $0< \eta< 1-\ep$, and all distributions
$\bQQ X$ and $\bQQ Y$, it holds that
\begin{align*}
\lefteqn{ L_\ep(X,Y) } \\ 
&\ge \gamma + \log\left(\bPr{-\log \frac{\bP
    {XY}{X,Y}^2}{\bQ{X}{X}\bQ{Y}{Y}} \geq \gamma}- \ep -
2\eta\right)_+ \\
&~~~ -\Delta - 4\log\frac 1 {\eta}-1.
\end{align*}
\end{theorem}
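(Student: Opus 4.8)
Let me first fix a deterministic protocol $\pi$ of length $l = L_\ep(X,Y)$ attaining $\ep$‑DE (admissible by the derandomization remark following \eqref{e:omn_error_def}), so the transcript $\Pi = \pi(X,Y)$ takes at most $2^l$ values and, writing $\hat X,\hat Y$ for the parties' estimates, $\bPr{\hat X = X,\, \hat Y = Y}\ge 1-\ep$. The plan is to realize the strategy of Section~\ref{sec:converse}: convert $\pi$ into a secret key, bound its length by $S_{\ep'}(X,Y)$, and then invoke Theorem~\ref{theorem:one-shot-converse-source-model} together with Lemma~\ref{l:bound_beta_epsilon}; the almost‑uniformity assumption \eqref{e:uniformity_assumption} is precisely what lets me replace the two‑sided likelihood ratio $-\log\frac{\bP{XY}{X,Y}^2}{\bQ{X}{X}\bQ{Y}{Y}}$ of the statement by the one‑sided ratio $\log\frac{\bP{XY}{X,Y}}{\bQ{X}{X}\bQ{Y}{Y}}$ on which Lemma~\ref{l:bound_beta_epsilon} operates, at a cost of $\Delta$.

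For the key extraction, draw a seed $S$ independent of $(X,Y)$ so that $f_S$ is uniform over a $2$‑universal family $\cF$ of maps $\cX\times\cY\to\cK$, and set $K_x = f_S(X,\hat Y)$ at party~$1$, $K_y = f_S(\hat X,Y)$ at party~$2$, and $K = f_S(X,Y)$; on the DE‑success event $K_x = K_y = K$, so $\bPr{(K_x,K_y)\neq(K,K)}\le\ep$. Applying the Leftover Hash Lemma~\ref{l:leftover_hash} with eavesdropper observation $Z=\Pi$ and using $H_{\min}(\bPP{XY})\ge\lamin$ (a consequence of \eqref{e:uniformity_assumption}) gives $\ttlvrn{\bPP{K\Pi S}}{\bPP{\mathtt{unif}}\bPP{\Pi S}}\le\sqrt{|\cK|\,2^{\,l-\lamin}}=:\delta$. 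Combining this with a coupling bound on the probability‑$\ep$ recovery‑failure event shows that $(K_x,K_y)$ is an $\ep'$‑SK — with the pair $(\Pi,S)$ playing the role of the transcript — for $\ep' \le \ep + \delta/2$. I then take $|\cK|$ to be the largest power of two for which $\delta\le 2\eta$; this gives $\ep'\le\ep+\eta$ and $\log|\cK|\gtrsim \lamin - l + 2\log\eta$, hence $\log|\cK|\le S_{\ep+\eta}(X,Y)$. (If the resulting lower bound on $\log|\cK|$ is nonpositive, then $L_\ep(X,Y)\ge 0$ already dominates the claimed right‑hand side, so one may assume it is positive.)

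It remains to chain the two single‑shot bounds. By Theorem~\ref{theorem:one-shot-converse-source-model} applied with parameter $\eta$,
\[
S_{\ep+\eta}(X,Y)\ \le\ -\log\beta_{\ep+2\eta}\big(\bPP{XY},\bQQ{X}\bQQ{Y}\big) + 2\log\tfrac1\eta,
\]
and by Lemma~\ref{l:bound_beta_epsilon} with threshold $\lambda$,
\[
-\log\beta_{\ep+2\eta}\big(\bPP{XY},\bQQ{X}\bQQ{Y}\big)\ \le\ \lambda - \log\Big(\bPr{\log\tfrac{\bP{XY}{X,Y}}{\bQ{X}{X}\bQ{Y}{Y}} < \lambda} - \ep - 2\eta\Big)_+.
\]
The final step uses the pointwise identity $-\log\frac{\bP{XY}{X,Y}^2}{\bQ{X}{X}\bQ{Y}{Y}} = \big(-\log\bP{XY}{X,Y}\big) - \log\frac{\bP{XY}{X,Y}}{\bQ{X}{X}\bQ{Y}{Y}}$ together with $-\log\bP{XY}{x,y}\le\lamax=\lamin+\Delta$ from \eqref{e:uniformity_assumption}: this yields the inclusion $\{-\log\frac{\bP{XY}{X,Y}^2}{\bQ{X}{X}\bQ{Y}{Y}}\ge\gamma\}\subseteq\{\log\frac{\bP{XY}{X,Y}}{\bQ{X}{X}\bQ{Y}{Y}}\le\lamin+\Delta-\gamma\}$, so choosing $\lambda$ just above $\lamin+\Delta-\gamma$ makes $\bPr{\log\frac{\bP{XY}{X,Y}}{\bQ{X}{X}\bQ{Y}{Y}}<\lambda}\ge\bPr{-\log\frac{\bP{XY}{X,Y}^2}{\bQ{X}{X}\bQ{Y}{Y}}\ge\gamma}$. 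Substituting into $\lamin - l + 2\log\eta\lesssim\log|\cK|\le S_{\ep+\eta}(X,Y)$ and solving for $l=L_\ep(X,Y)$ gives the stated inequality, with the additive constant pinned down by careful tracking of the roundings.

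The hard part will be the secret‑key extraction rather than the final chaining: one must verify the \emph{combined} $\ep'$‑SK condition of Section~\ref{s:secret_keys}, i.e.\ that $K_x,K_y$ are simultaneously nearly uniform, nearly equal, and nearly independent of the \emph{entire} side information $(\Pi,S)$ — the last point being why the seed $S$ must be fed into Lemma~\ref{l:leftover_hash} alongside $\Pi$ — while confining the recovery failure to the probability‑$\ep$ event, so that the two error sources ($\ep$ from recovery, $\delta$ from hashing) combine in the right way. Everything else is routine: Theorem~\ref{theorem:one-shot-converse-source-model} and Lemma~\ref{l:bound_beta_epsilon} are quoted as stated, and the only genuinely new manipulation is the elementary reduction of the two‑sided information density to the one‑sided one, which is exactly where the $\Delta$‑loss in the theorem originates.
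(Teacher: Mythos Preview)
Your proposal is correct and follows essentially the same route as the paper's own proof: extract a secret key from the $\ep$-DE protocol via the leftover hash lemma (using $H_{\min}(\bPP{XY})\ge\lamin$), upper-bound its length by Theorem~\ref{theorem:one-shot-converse-source-model} and Lemma~\ref{l:bound_beta_epsilon}, and then choose $\lambda=\lamax-\gamma$ so that the almost-uniformity assumption converts the one-sided likelihood-ratio tail into the two-sided one at a cost of $\Delta$. Your write-up is in fact more explicit than the paper's about how the combined $\ep'$-SK condition is met (via $K_x=f_S(X,\hat Y)$, $K_y=f_S(\hat X,Y)$, and the triangle inequality combining the recovery error $\ep$ with the hashing defect $\delta$), but the argument and the bookkeeping of constants are the same.
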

\begin{remark}
If $\Delta\approx 0$ (the almost uniform case), the bound above yields
Result~\ref{res:lower_bound} upon choosing $\bQQ X = \bPP X$ and $\bQQ
Y = \bPP Y$. 
\end{remark}
{\it Proof.} Given a protocol $\pi$ of length $l$ that attains
$\ep$-DE, 
 using Lemma~\ref{l:leftover_hash} we can generate an
$(\ep+\eta)$-SK that is almost independent of $\Pi$ and takes values in $\cK$ with
\[
\log |\cK| \geq \lamin - l - 2\log(1/\eta)-1.
\]
Also, by Theorem~\ref{theorem:one-shot-converse-source-model}
\[
\log|\cK| \leq -\log \beta_{\ep+2\eta}(\bPP {XY}, \bQQ{X}\bQQ{Y})
+2\log(1/\eta),
\]
which along with the inequality above and
Lemma~\ref{l:bound_beta_epsilon} yields
\begin{align*}
l &\geq \lamin + \log\left(\bPr{\log\frac{ \bP {XY}{X, Y}}{\bQ{X}
    {X}\bQ{Y}{Y}} < \lambda} - \ep-2\eta\right)_+ \\
    &~~~ - \lambda
-4\log(1/\eta)-1.
\end{align*}
The claimed bound follows upon choosing $\lambda = \lamax-\gamma$ and
using assumption \eqref{e:uniformity_assumption}.  \qed

\subsection{Converse bound for all distributions}
The shortcoming of Theorem~\ref{t:converse_AU} is the $\Delta$-loss,
which is negligible only if $\lamax\approx \lamin$. To circumvent
this loss, we divide the spectrum of $\bPP {XY}$ into slices such that,
conditioned on any slice, the distribution is almost uniform with a
small margin $\Delta$. To lower bound the
worst-case communication complexity of a given protocol, we identify a
particular slice where appropriately many bits are
communicated; the required slice is selected using
Lemma~\ref{l:good_index} below.

Given $\lamax$, $\lamin$, and $\Delta>0$, let $N$ be as in
\eqref{eq:number-of-round}, $\cT_0$ be as in \eqref{e:typical}, and
$\lambda_i$ and $\cT_i$, too, be as defined there, with
$h_{\bPP{X|Y}}(x|y)$ replaced by $h_{\bPP{XY}}(xy)$ in those
definitions.  Let random variable $J$ take the value $j$ when $\{(X,Y)
\in \cT_j\}$.  For a protocol $\Pi$ attaining $\ep$-DE, denote
\begin{align}
\cE_{\mathtt{correct}} &:= \{X= \hat X, Y = \hat Y\}, 
\nonumber
\\ 
\cE_\gamma
&:= \{\romn XY\ge \gamma\}, 
\label{e:E_gamma_def}
\\ 
\cE_j &:= \cE_{\mathtt{correct}} \cap
\cT_0^c\cap \cE_\gamma \cap \{J=j\},\quad 1\leq j \leq N, 
\nonumber
\\ \bPP
\gamma &:= \bP {XY}{\cE_\gamma}.
\nonumber
\end{align}
\begin{lemma}\label{l:good_index}
There exists an index $1\leq j\leq N$ such that $\bP J j> 1/N^2$ and
\[
\bP {XY\mid J}{\cE_j\mid j}\ge \left(\bPP \gamma -\ep - \bP{XY}{\cT_0}
- \frac 1N\right).
\]
\end{lemma}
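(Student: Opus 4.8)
The plan is to use a pigeonhole (averaging) argument over the $N$ slices $\cT_1, \ldots, \cT_N$ of the spectrum of $\bPP{XY}$. First I would lower bound the probability of the ``good'' event $\cE_{\mathtt{correct}} \cap \cT_0^c \cap \cE_\gamma$. Since the protocol attains $\ep$-DE, we have $\bPr{\cE_{\mathtt{correct}}} \ge 1-\ep$; combining this with $\bPr{\cE_\gamma} = \bPP\gamma$ and $\bPr{\cT_0} = \bP{XY}{\cT_0}$ via the union bound gives
\[
\bPr{\cE_{\mathtt{correct}} \cap \cT_0^c \cap \cE_\gamma} \ge \bPP\gamma - \ep - \bP{XY}{\cT_0}.
\]
Next, note that on the complement of $\cT_0$, the random index $J$ takes one of the values $1, \ldots, N$, so this event is partitioned into the $N$ pieces $\cE_j' := \cE_{\mathtt{correct}} \cap \cT_0^c \cap \cE_\gamma \cap \{J=j\}$, i.e., $\bPr{\cup_{j=1}^N \cE_j'} \ge \bPP\gamma - \ep - \bP{XY}{\cT_0}$.

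The core step is then a two-sided pigeonhole. On one hand, I want a slice carrying enough conditional probability of $\cE_j'$; on the other, I must discard slices that are themselves too light (those with $\bP J j \le 1/N^2$), since the conditioning $\bP{XY\mid J}{\,\cdot\mid j}$ blows up small masses. The total mass of the discarded light slices is at most $N \cdot (1/N^2) = 1/N$, so the remaining (``heavy'') slices still satisfy $\sum_{j:\, \bP J j > 1/N^2} \bP{XY}{\cE_j'} \ge \bPP\gamma - \ep - \bP{XY}{\cT_0} - 1/N$. Averaging this over the at most $N$ heavy slices, there must exist a heavy index $j$ with
\[
\bP{XY}{\cE_j'} \ge \frac1N\left(\bPP\gamma - \ep - \bP{XY}{\cT_0} - \frac1N\right),
\]
but for that index, dividing by $\bP J j \le 1$ is not quite what we want — instead I would argue directly that for some heavy $j$,
\[
\bP{XY\mid J}{\cE_j'\mid j} = \frac{\bP{XY}{\cE_j'}}{\bP J j} \ge \bPP\gamma - \ep - \bP{XY}{\cT_0} - \frac1N,
\]
because if every heavy slice had conditional mass strictly below this threshold, then $\sum_{j \text{ heavy}} \bP{XY}{\cE_j'} = \sum_j \bP J j \cdot \bP{XY\mid J}{\cE_j'\mid j} < \left(\bPP\gamma - \ep - \bP{XY}{\cT_0} - 1/N\right)\sum_j \bP J j \le \bPP\gamma - \ep - \bP{XY}{\cT_0} - 1/N$, contradicting the displayed lower bound on the heavy-slice sum. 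Identifying $\cE_j'$ with $\cE_j$ as defined in \eqref{e:E_gamma_def}, this is exactly the claimed bound, and the chosen $j$ satisfies $\bP J j > 1/N^2$ by construction.

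The main subtlety — and the step I would be most careful about — is the bookkeeping around discarding the light slices: one must make sure that the $1/N$ loss from dropping slices with $\bP J j \le 1/N^2$ is accounted for \emph{before} passing to conditional probabilities, and that the final averaging is over heavy slices only (so that the denominator $\bP J j$ is safely bounded, and in fact the weighted-average contradiction argument goes through cleanly). Everything else is a routine union bound and an averaging argument; no information-theoretic machinery is needed for this lemma, which is purely combinatorial and sets up the slice on which Theorem~\ref{t:converse_AU} will later be applied conditionally.
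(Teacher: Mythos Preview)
Your proposal is correct and follows essentially the same approach as the paper: both discard the ``light'' slices with $\bP{J}{j}\le 1/N^2$ (incurring a total loss of at most $1/N$), then use a weighted-average argument over the remaining heavy slices to extract one with large conditional mass. The only cosmetic difference is that the paper writes the last step directly as $\sum_{j\in\cJ_1}\bP{J}{j}\,\bP{XY\mid J}{\cE_j\mid j}\le \max_{j\in\cJ_1}\bP{XY\mid J}{\cE_j\mid j}$, whereas you phrase the equivalent bound by contradiction; your brief detour through the naive $1/N$-averaging is unnecessary but harmless.
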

{\it Proof.} Let $\cJ_1$ be the set of indices $1\leq j \leq N$ such
that $\bP J j >1/N^2$, and let $\cJ_2 = \{1, ..., N\}\setminus
\cJ_1$. Note that $\bP J{\cJ_2} \leq 1/N$. Therefore,
\begin{align*}
\bPP \gamma - \ep - \bP{XY}{\cT_0} &\le \bPr
     {\cE_{\mathtt{correct}}\cap \cT_0^c\cap \cE_\gamma} \\ &\le
     \sum_{j \in \cJ_1}\bP J j\bP{XY|J}{\cE_j\mid j} + \bP J {\cJ_2}
     \\ &\le \max_{j \in \cJ_1}\bP {XY|J}{\cE_j \mid j} + \frac 1N.
\end{align*}
Thus, the maximizing $j\in \cJ_1$ on the right satisfies the claimed
properties.  \qed

We now state our main converse bound.
\begin{theorem}[Single-shot converse] \label{t:converse_general}
For $0\le \ep <1$, $0< \eta < 1-\ep$, and parameters $\Delta, N$ as
above, the following lower bound on $L_\ep(X,Y)$ holds:
\begin{align*}
L_\ep(X,Y) &\ge \gamma + 3\log\left(\bPP \gamma- \ep - \bP{XY}{\cT_0} -
\frac 1N\right)_+ \\
&~~~ +\log(1-2\eta)-\Delta -6\log N - 4\log\frac 1
      {\eta}-1.
\end{align*}
\end{theorem}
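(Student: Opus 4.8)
\emph{Proof plan.} The plan is to deduce the bound from the almost-uniform converse Theorem~\ref{t:converse_AU} by conditioning on one favourable slice of the spectrum of $\bPP{XY}$, the slice being supplied by Lemma~\ref{l:good_index}. First I would fix a deterministic protocol $\pi$ attaining $\ep$-DE with $|\pi| = L_\ep(X,Y) =: l$ (randomness does not help), and set $\delta := \bP{XY}{\cE_\gamma} - \ep - \bP{XY}{\cT_0} - \tfrac1N$, with $\cE_\gamma,\cE_{\mathtt{correct}},\cE_j,J$ as in~\eqref{e:E_gamma_def}. If $\delta\le 0$ the right-hand side of the theorem is $-\infty$ (the convention $\log 0=-\infty$, as in Lemma~\ref{l:bound_beta_epsilon}), so assume $\delta>0$. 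Lemma~\ref{l:good_index} then yields an index $1\le j\le N$ with $\bP J j>1/N^2$ and $\bP{XY\mid J}{\cE_j\mid j}\ge\delta$; hence $\bP{XY}{\cE_j}=\bP J j\,\bP{XY\mid J}{\cE_j\mid j}\ge\delta/N^2>0$, so the conditional law $\check{\mathrm P}:=\bPP{XY\mid\cE_j}$ is well defined.

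The heart of the argument is that conditioning on $\cE_j$ buys three things at once. Since $\cE_j\subseteq\cE_{\mathtt{correct}}$, the (unchanged) protocol $\pi$ reconstructs $(X,Y)$ with probability one when the source has joint law $\check{\mathrm P}$, so $\pi$ attains $0$-DE for that source and the corresponding minimum communication is $\le l$. Since $\cE_j\subseteq\cT_j$, a width-$\Delta$ slice of $-\log\bPP{XY}$, renormalisation gives $-\log\check{\mathrm P}(x,y)\in[\lambda_j-\log\tfrac1{\bP{XY}{\cE_j}},\ \lambda_j+\Delta-\log\tfrac1{\bP{XY}{\cE_j}})$ for all $(x,y)\in\mathrm{supp}(\check{\mathrm P})$, so $\check{\mathrm P}$ is almost uniform with margin $\Delta$. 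Finally, since $\cE_j\subseteq\cE_\gamma$, every $(x,y)\in\mathrm{supp}(\check{\mathrm P})$ satisfies $\romn{x}{y}\ge\gamma$, and the pointwise identity $-\log\frac{\check{\mathrm P}(x,y)^2}{\bP X x\,\bP Y y}=\romn{x}{y}-2\log\tfrac1{\bP{XY}{\cE_j}}$ shows that under $\check{\mathrm P}$ the random variable $-\log\frac{\check{\mathrm P}(X,Y)^2}{\bP X X\,\bP Y Y}$ is almost surely at least $\gamma-2\log\tfrac1{\bP{XY}{\cE_j}}$.

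I would then apply Theorem~\ref{t:converse_AU} to the source with joint law $\check{\mathrm P}$ — legitimate, as $\check{\mathrm P}$ is almost uniform with margin $\Delta$ — taking error parameter $0$, the same $\eta$, reference marginals $\bQQ X=\bPP X$ and $\bQQ Y=\bPP Y$, and the theorem's free threshold equal to $\gamma-2\log\tfrac1{\bP{XY}{\cE_j}}$. By the last observation the probability inside that bound equals $1$, so (taking $\eta<1/2$, since the bound and the target are otherwise vacuous) the theorem gives that the minimum communication for $0$-DE of the $\check{\mathrm P}$-source is at least $\gamma-2\log\tfrac1{\bP{XY}{\cE_j}}+\log(1-2\eta)-\Delta-4\log\tfrac1\eta-1$; since $\pi$ is a length-$l$ protocol attaining $0$-DE there, that minimum is $\le l$, whence
\[
l \,\ge\, \gamma - 2\log\tfrac1{\bP{XY}{\cE_j}} + \log(1-2\eta) - \Delta - 4\log\tfrac1\eta - 1 .
\]
Substituting $\bP{XY}{\cE_j}\ge\delta/N^2$ gives $-2\log\tfrac1{\bP{XY}{\cE_j}}\ge-4\log N-2\log\tfrac1\delta$, and since $0<\delta\le 1\le N^2$ one has $-2\log\tfrac1\delta=2\log\delta\ge 3\log\delta-2\log N$; collecting terms reproduces exactly the displayed inequality of the theorem (with $3\log(\cdot)_+=3\log\delta$ in the case $\delta>0$ at hand).

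The step I expect to be the real obstacle — and the reason Lemma~\ref{l:good_index} carries the apparently redundant clause $\bP J j>1/N^2$ alongside the conditional-probability bound — is controlling what conditioning on $\cE_j$ costs: morally this is the min-entropy $\log\tfrac1{\bP{XY}{\cE_j}}$ lost in passing from $\bPP{XY}$ to $\check{\mathrm P}$, and it surfaces above as the $2\log\tfrac1{\bP{XY}{\cE_j}}$ downward shift of the threshold. Only a two-sided guarantee — that the chosen slice carries mass $>1/N^2$ \emph{and} that $\cE_j$ fills at least a $\delta$-fraction of that slice — keeps this loss at $O(\log N+\log\tfrac1\delta)$, which is precisely what is absorbed into the $-6\log N$ and $3\log(\cdot)_+$ terms. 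A secondary but essential bookkeeping point is to take the \emph{unconditional} marginals $\bPP X,\bPP Y$ (not the marginals of $\check{\mathrm P}$) as the reference distributions: this is what makes the log-likelihood ratio equal $\romn{x}{y}$ up to the explicit additive shift $2\log\tfrac1{\bP{XY}{\cE_j}}$, so that the almost-sure lower bound of the second paragraph — hence the collapse of the probability term to $1$ — goes through.
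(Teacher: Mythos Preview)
Your proposal is correct and follows essentially the same approach as the paper: condition on the good slice $\cE_j$ supplied by Lemma~\ref{l:good_index}, observe that the protocol attains $0$-DE for the conditioned law, and apply Theorem~\ref{t:converse_AU} with $\bQQ X=\bPP X$, $\bQQ Y=\bPP Y$. The only (harmless) difference is bookkeeping: you use the exact renormalisation $\check{\mathrm P}(x,y)=\bP{XY}{x,y}/\bP{XY}{\cE_j}$ to get margin exactly $\Delta$ and a threshold shift of $2\log\tfrac{1}{\bP{XY}{\cE_j}}$, whereas the paper uses the looser sandwich $\bP{XY}{x,y}\le\bP{XY|\cE_j}{x,y}\le\tfrac{N^2}{\alpha}\bP{XY}{x,y}$ to get margin $\Delta-\log\alpha+2\log N$ and threshold shift $4\log N-2\log\alpha$ --- your intermediate bound is in fact slightly sharper (coefficients $2,4$ in place of $3,6$) and you then weaken it to match the stated inequality.
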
 
{\it Proof.} Let $j$ satisfy the properties stated in Lemma
\ref{l:good_index}.  The basic idea is to apply Theorem
\ref{t:converse_AU} to $\bPP{XY|\cE_j}$, where $\bPP{XY\mid \cE_j}$
denotes the conditional distributions on $X,Y$ given the event
$\cE_j$.

First, we have
\begin{align} \label{eq:lower-PXY}
\bP{XY|\cE_j}{x,y} \ge \bP{XY}{x,y}.
\end{align}
Furthermore, denoting $\alpha = \bPP \gamma - \ep -\bP{XY}{\cT_0} -
1/N$ and noting 
$\bP{J}{j} > 1/N^2$, we have for all $(x,y) \in \cE_j$ that
\begin{align}
\bP{XY|\cE_j}{x,y} &\le \frac {1}{\alpha}\bP {XY|J=j}{x,y} \\ &\le
\frac{N^2}{\alpha} \bP{XY}{x,y},
\label{eq:upper-PXY} 
\end{align}
where $\bPP{XY\mid J=j}$ denotes the conditional distributions on
$X,Y$ given $\{J=j\}$.  Thus, \eqref{eq:lower-PXY} and
\eqref{eq:upper-PXY} together imply, for all $(x,y)\in \cE_j$,
\[
\lambda_j + \log\alpha - 2 \log N \le -\log \bP{XY|\cE_j}{x,y} \le
\lambda_j+\Delta,
\]
$i.e.$, $\bPP{XY|\cE_j}$ is almost uniform with margin
$\Delta-\log \alpha + 2 \log N$
($cf.$~\eqref{e:uniformity_assumption}).  Also, note that
\eqref{eq:upper-PXY} implies
\begin{align*}
&\bP {XY|\cE_j}{-\log\frac{\bPP{XY|\cE_i}(X,Y)^2}{\bP X X\bP Y Y} \ge
    \gamma + 2\log \alpha -4\log N} 
\\ 
&\geq \bP {XY|\cE_j}{-\log\frac{\bP{XY}{X,Y}^2}{\bP X X\bP Y Y} \ge
    \gamma} 
\\
&= \bP {XY\mid 
    \cE_i}{\cE_\gamma} 
\\ 
&= 1,
\end{align*}
where the final equality holds by the definition of $\cE_\gamma$ in
\eqref{e:E_gamma_def}. Moreover, 
\[
\bP{XY|\cE_j}{X=\hat X, Y=\hat Y} = 1.  
\]
Thus, the proof
is completed by applying Theorem \ref{t:converse_AU} to
$\bPP{XY|\cE_j}$ with $\bQQ{X} = \bPP X$ and $\bQQ Y = \bPP Y$, and
$\Delta-\log \alpha + 2 \log N$ in place of $\Delta$.  \qed

\subsection{Converse bound for simple communication protocol} \label{subsec:converse-simple-protocol}
We close by noting a lower bound for the length of communication when
we restrict to simple communication. For simplicity assume that the joint
distribution $\bPP{XY}$ is indecomposable, $i.e.$, the 
{\it maximum common function} of $X$ and $Y$ is a constant (see
\cite{GacKor73}) and the parties can't agree on even a single bit
without communicating ($cf.$~\cite{Wit75}). The following bound
holds by a standard converse argument using the information spectral
method ($cf.$~\cite[Lemma 7.2.2]{Han03}).
\begin{proposition} \label{proposition:simple}
For $0 \le \ep < 1$, we have
\begin{align*}
 &L_\ep^{\mathrm{s}}(X,Y) \\
 &\ge \inf\bigg\{ l_1 + l_2 : \forall \delta >
 0, \\
 &~\mathbb{P}\Big( h(X|Y) > l_1 +\delta \mbox{ or } h(Y|X) > l_2 +
 \delta \Big) \le \ep + 2\cdot 2^{-\delta} \bigg\}.
\end{align*}
\end{proposition}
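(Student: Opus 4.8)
The plan is to prove Proposition~\ref{proposition:simple} by a direct, single-shot converse argument in the spirit of the Slepian--Wolf converse via information spectrum (\cite[Lemma 7.2.2]{Han03}). Fix a simple protocol $\pi$ attaining $\ep$-DE, with two messages $\Pi_1 = \Pi_1(X)$ of length $l_1$ and $\Pi_2 = \Pi_2(Y)$ of length $l_2$, and decoders $\hat Y = \hat Y(X, \Pi_1, \Pi_2)$ and $\hat X = \hat X(Y, \Pi_1, \Pi_2)$. Since randomization does not help (and by the derandomization remark in the definition of $L_\ep$), I may take $\pi$ deterministic, so $|\pi| = l_1 + l_2$ and the goal is to show $(l_1, l_2)$ lies in the feasible set on the right-hand side. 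The two recovery requirements decouple: correctness of $\hat X$ at party~1 is really a one-sided Slepian--Wolf constraint for sending $X$ to an observer of $(Y,\Pi_1)$ using $\Pi_1$, and likewise for $\hat Y$ at party~2; the union bound $\bPr{\hat X \ne X \text{ or } \hat Y \ne Y} \le \ep$ is what ties them together.

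Next I would run the standard spectrum converse on each side. For party~1's recovery of $Y$: condition on $(X,\Pi_1,\Pi_2)$ and observe that $\hat Y$ ranges over a set of size at most $2^{l_2}$ (since $\Pi_2$ takes at most $2^{l_2}$ values and $\hat Y$ is a function of it together with $(X,\Pi_1)$; more carefully, for fixed $(X,\Pi_1)$ the map $\Pi_2 \mapsto \hat Y$ has image of size $\le 2^{l_2}$). A counting/typicality argument then shows that the probability of correctly decoding $Y$ while $h(Y|X) > l_2 + \delta$ is at most $2^{-\delta}$: indeed, on the event $\{h(Y|X) > l_2+\delta\}$ each fixed candidate $y$ has $\bP{Y|X}{y|X} < 2^{-l_2-\delta}$, and summing over the $\le 2^{l_2}$ candidates gives probability $< 2^{-\delta}$ that the true $Y$ is among them. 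Hence $\bPr{\hat Y = Y,\ h(Y|X) > l_2 + \delta} \le 2^{-\delta}$, and symmetrically $\bPr{\hat X = X,\ h(X|Y) > l_1 + \delta} \le 2^{-\delta}$. The symmetric step uses that for fixed $(Y,\Pi_2)$ the map $\Pi_1 \mapsto \hat X$ has image of size $\le 2^{l_1}$.

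Combining, I would write, for every $\delta > 0$,
\begin{align*}
\mathbb{P}\big( h(X|Y) > l_1+\delta \text{ or } h(Y|X) > l_2+\delta\big)
&\le \bPr{\hat X \ne X \text{ or } \hat Y \ne Y} \\
&\quad + \bPr{\hat X = X,\ h(X|Y) > l_1+\delta} \\
&\quad + \bPr{\hat Y = Y,\ h(Y|X) > l_2+\delta} \\
&\le \ep + 2\cdot 2^{-\delta},
\end{align*}
where the first inequality is a union bound splitting on whether the protocol succeeds, and the second uses the two spectrum estimates and the $\ep$-DE guarantee. Since this holds for every $\delta>0$, the pair $(l_1,l_2)$ belongs to the set in the infimum, so $l_1 + l_2$ is at least that infimum; taking the infimum over all simple $\ep$-DE protocols gives the claim. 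The indecomposability hypothesis is not needed for this direction and is presumably stated only to make the matching achievability/first-order statement clean; I would either drop mention of it or note it is unused here.

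The main obstacle is the bookkeeping in the counting step: one must argue carefully that for fixed $(X,\Pi_1)$ the decoder output $\hat Y$ takes at most $2^{l_2}$ distinct values (and symmetrically), since $\hat Y$ formally depends on all of $(X,\Pi_1,\Pi_2)$ — the point is that $\Pi_2$ is the only remaining degree of freedom and it is an $l_2$-bit string, so the image is bounded by $2^{l_2}$ regardless of how $\hat Y$ depends on $\Pi_1$. Once that is granted, the per-side tail bound is the textbook random-binning converse and the rest is a one-line union bound; no delicate estimates are involved.
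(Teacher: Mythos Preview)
Your proof is correct and follows essentially the same route as the paper's: derandomize, apply the standard information-spectrum list-size bound on each side to get $\bPr{\hat X = X,\ h(X|Y) > l_1+\delta} \le 2^{-\delta}$ and its symmetric counterpart, and combine via the same union bound (the paper writes this as $\bP{XY}{\cT^c}\le \bP{XY}{\cA^c}+\bP{XY}{\cT_1^c\cap\cA_1}+\bP{XY}{\cT_2^c\cap\cA_2}$). The one point of divergence is that the paper actually \emph{invokes} indecomposability in its proof (``since $\bPP{XY}$ is indecomposable, both parties have to predetermine the lengths of messages they send''), whereas you argue it is not needed; your observation that taking $l_i=\max|\Pi_i|$ still gives $|\{\Pi_i\text{ values}\}|\le 2^{l_i}$ (prefix-freeness) and $|\pi|=l_1+l_2$ is sound, so this is a mild simplification rather than a gap.
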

\begin{proof}
Since randomization (local or shared) does not help in improving the
length of communication ($cf.$~\cite[Chapter 3]{KushilevitzNisan97})
we can restrict to deterministic protocols. Then, 
since $\bPP{XY}$ is indecomposible, both 
parties have to predetermine the lengths of messages they send; let $l_1$ and $l_2$,
respectively, be the length of message sent by the first and the
second party. For $\delta > 0$, let
\begin{align*}
\cT_1 &:= \Big\{ (x,y): -\log \bP{X|Y}{x|y} \le l_1 + \delta \Big\},
\\ \cT_2 &:= \Big\{ (x,y): - \log \bP{Y|X}{y|x} \le l_2 + \delta
\Big\},
\end{align*}
and $\cT := \cT_1 \cap \cT_2$.  Let $\cA_1$ and $\cA_2$ be the set of
all $(x,y)$ such that party 2 and party 1 correctly recover $x$ and
$y$, respectively, and let $\cA := \cA_1 \cap \cA_2$.  Then, for any
simple communication protocol that attains $\ep$-DE, we have
\begin{align*}
\bP{XY}{\cT^c} &= \bP{XY}{\cT^c \cap \cA^c} + \bP{XY}{\cT^c \cap \cA}
\\ &\le \bP{XY}{\cA^c} + \bP{XY}{\cT_1^c \cap \cA} + \bP{XY}{\cT_2^c
  \cap \cA} \\ &\le \ep + \bP{XY}{\cT_1^c \cap \cA_1} +
\bP{XY}{\cT_2^c \cap \cA_2} \\ &\le \ep + 2 \cdot 2^{-\delta},
\end{align*}
where the last inequality follows by a standard argument
($cf.$~\cite[Lemma 7.2.2]{Han03}) as follows:
\begin{align*}
\bP{XY}{\cT_1^c \cap \cA_1} &\leq \sum_{y}\bP Y
y\bP{X|Y}{\cT_1^c\cap \cA_1| y}
\\
&\leq \sum_{y}\bP Y y |\{x: (x,y)\in \cA_1\}| 2^{-l_1 - \delta}
\\
&\leq \sum_{y}\bP Y y |\{x: (x,y)\in \cA_1\}| 2^{-l_1 - \delta}
\\
&\leq \sum_{y}\bP Y y 2^{ - \delta}
\\
&= 2^{-\delta},
\end{align*}
and similarly for $\bP{XY}{\cT_2^c \cap \cA_2}$; the desired bound follows.
\end{proof}

\section{General sources}\label{s:general_sources}
While the best rate of communication required for two parties to
exchange their data is known \cite{CsiNar04}, and it can be attained
by simple (noninteractive) Slepian-Wolf compression on both sides, the
problem remains unexplored for general sources. In fact, the answer is
completely different in general and simple Slepian-Wolf compression is suboptimal.

Formally, let $(X_n, Y_n)$ with joint distribution\footnote{The
  distributions $\bPP{X_nY_n}$ need not satisfy the consistency
  conditions.} $\bPP{X_nY_n}$ be a sequence of sources.  We need the
following concepts from the information spectrum method; see
\cite{Han03} for a detailed account. For random variables $(\bX,\bY) =
\{(X_n, Y_n)\}_{n=1}^\infty$, the the {\it inf entropy rate}
$\underline{H}(\bX \bY)$ and the {\it sup entropy rate}
$\overline{H}(\bX \bY)$ are defined as follows:
\begin{align*}
\underline{H}(\bX \bY) &= \sup\left\{\alpha \mid \lim_{n\rightarrow
  \infty} \bPr{\frac{1}{n}h(X_nY_n) < \alpha} = 0\right\},
\\ \overline{H}(\bX \bY) &= \inf\left\{\alpha \mid \lim_{n\rightarrow
  \infty} \bPr{\frac{1}{n}h(X_nY_n) > \alpha} = 0\right\};
\end{align*}
the {\it sup-conditional entropy rate} $\overline{H}(\bX| \bY)$ is
defined analogously by replacing $h(X_nY_n)$ with $h(X_n| Y_n)$. To
state our result, we also need another quantity defined by a
limit-superior in probability, namely the {\it sup sum conditional
  entropy rate}, given by
\begin{align*}
&\Romn \\ &= \inf\left\{\alpha \mid \lim_{n\rightarrow \infty}
  \bPr{\frac{1}{n} h(X_n \triangle Y_n)> \alpha} = 0\right\}.
\end{align*}

The result below characterizes $R^*(\bX, \bY)$ (see
Definition~\ref{d:R_star}).
\begin{theorem}\label{t:communication_omniscience_general}
For a sequence of sources $(\bX,\bY) = \{(X_n, Y_n)\}_{n=1}^\infty$,
\[
R^*(\bX,\bY) = \Romn.
\]
\end{theorem}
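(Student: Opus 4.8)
The plan is to prove the two directions separately, treating $R^*(\bX,\bY)=\Romn$ as a matching pair of asymptotic bounds obtained by feeding the single-shot Theorems~\ref{t:interactive_data_exchange} and~\ref{t:converse_general} sequences of parameters that become negligible after normalizing by $n$.

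\textbf{Achievability ($R^* \le \Romn$).} Fix $\delta > 0$ and set $\alpha = \Romn + \delta$. By the definition of $\Romn$ as a $\limsup$ in probability, $\bPr{\tfrac1n h(X_n\triangle Y_n) > \alpha} \to 0$; call this vanishing quantity $\ep_n'$. For the source $(X_n,Y_n)$ I would choose the spectrum-slicing parameters in Theorem~\ref{t:interactive_data_exchange} to be sublinear in $n$: take $\lamin = 0$, $\lamax = cn$ for a constant $c$ large enough that $\bP{X_nY_n}{\cT_0} \to 0$ (such a $c$ exists since $h(X_n|Y_n) \le \log|\cX_n|$, and for the general-source statement one may assume the alphabets grow subexponentially, or simply absorb $\cT_0$ into the error), then $\Delta = \sqrt{n}$ and $\eta = \log n$, so that $N = \lamax/\Delta = \cO(\sqrt n)$ and $N 2^{-\eta} \to 0$. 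With these choices Theorem~\ref{t:interactive_data_exchange} gives a protocol of length $l_{\max} = n\alpha + \Delta + N + \eta + 1 = n\alpha + o(n)$ whose error probability is at most $\ep_n' + \bP{X_nY_n}{\cT_0} + N2^{-\eta} =: \ep_n \to 0$. Hence $\tfrac1n L_{\ep_n}(X_n,Y_n) \le \alpha + o(1)$, so $R^*(\bX,\bY) \le \Romn + \delta$, and letting $\delta \downarrow 0$ finishes this half.

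\textbf{Converse ($R^* \ge \Romn$).} Suppose, for contradiction, that $R^*(\bX,\bY) < \Romn$; then there is $\delta > 0$ and a sequence $\ep_n \to 0$ with $\tfrac1n L_{\ep_n}(X_n,Y_n) \le \Romn - 2\delta$ along a subsequence. I would apply Theorem~\ref{t:converse_general} to $(X_n,Y_n)$ with $\gamma = \gamma_n := n(\Romn - \delta)$, and again sublinear slicing parameters $\Delta = \sqrt n$, $\eta = 1/n$ (so $\log(1/\eta) = \log n$), $\lamax = cn$, giving $N = \cO(\sqrt n)$, so that every additive correction term ($\Delta$, $6\log N$, $4\log(1/\eta)$, $\log(1-2\eta)$, the constant) is $o(n)$, and $\bP{X_nY_n}{\cT_0} \to 0$, $1/N \to 0$. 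The key point is the behavior of $\bPP{\gamma_n} = \bPr{h(X_n\triangle Y_n) > \gamma_n} = \bPr{\tfrac1n h(X_n\triangle Y_n) > \Romn - \delta}$: by the definition of $\Romn$ as the \emph{smallest} $\alpha$ for which this probability vanishes, and since $\Romn - \delta < \Romn$, this probability does \emph{not} tend to $0$, i.e.\ $\limsup_n \bPP{\gamma_n} > 0$. Therefore along a further subsequence $\bPP{\gamma_n} - \ep_n - \bP{X_nY_n}{\cT_0} - 1/N$ is bounded below by a positive constant, so the term $3\log(\,\cdot\,)_+$ in Theorem~\ref{t:converse_general} is $\cO(1)$, not $-\infty$. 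Dividing the bound by $n$ yields $\tfrac1n L_{\ep_n}(X_n,Y_n) \ge (\Romn - \delta) + o(1)$, contradicting $\tfrac1n L_{\ep_n}(X_n,Y_n) \le \Romn - 2\delta$. Hence $R^*(\bX,\bY) \ge \Romn$.

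\textbf{Main obstacle.} The delicate step is the converse: one must choose $\gamma_n$ so that $\bPP{\gamma_n}$ stays bounded away from $0$ (to keep the $\log(\cdot)_+$ term harmless) while $\gamma_n/n$ still approaches $\Romn$, and simultaneously choose $\Delta, N, \eta$ so that the $\Delta + 6\log N + 4\log(1/\eta)$ slack is $o(n)$ \emph{and} $\bP{X_nY_n}{\cT_0} \to 0$ with $1/N \to 0$. Balancing these — essentially the $\Delta = \sqrt{n}$, $N = \cO(\sqrt{n})$ regime — is the crux; once the parameters are pinned down, both directions are routine. A minor technical caveat is the $\cT_0$ probability in the general-source setting, which requires either a mild assumption on alphabet growth or a truncation argument to ensure $\bP{X_nY_n}{\cT_0} \to 0$; I would handle this exactly as in the proof of Theorem~\ref{t:interactive_data_exchange}'s asymptotic corollaries.
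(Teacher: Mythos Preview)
Your approach is correct and essentially matches the paper's: both directions follow by feeding Theorems~\ref{t:interactive_data_exchange} and~\ref{t:converse_general} parameter sequences that are $o(n)$, with $\gamma_n = n(\Romn \pm \delta)$. The paper's specific choices differ only cosmetically—it sets $\lamin,\lamax$ via the inf/sup entropy rates (so that $\bP{X_nY_n}{\cT_0}\to 0$ holds by definition, which dissolves your alphabet-growth caveat without any side assumption), and in the converse takes $\Delta$ constant with $N=O(n)$ rather than $\Delta=\sqrt n$—but the structure of the argument is identical.
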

{\it Proof.} The claim follows from
Theorems~\ref{t:interactive_data_exchange}
and~\ref{t:converse_general} on choosing the spectrum slicing
parameters $\lamin, \lamax$, and $\Delta$ appropriately.

Specifically, using Theorem~\ref{t:interactive_data_exchange} with
\begin{align*}
\lamin &= n(\underline{H}(\bX, \bY) -\delta), \\ \lamax &=
n(\overline{H}(\bX, \bY) +\delta), \\ \Delta &= \sqrt{\lamax- \lamin}
\\&= N \\ \eta &= \Delta, \\ l_{\max} &= n (\Romn+\delta) + 3\Delta +1
\\ & = n(\Romn+\delta) + O(\sqrt{n}),
\end{align*}
where $\delta>0$ is arbitrary, we get a communication protocol of rate
$\Romn +\delta + O(n^{-1/2})$ attaining $\ep_n$-DE with $ \ep_n
\rightarrow 0$. Since $\delta > 0$ is arbitrary, $R^*(\bX,\bY) \leq
\Romn$.

For the other direction, given a sequence of protocols attaining
$\ep_n$-DE with $\ep_n \rightarrow 0$. Let
\begin{align*}
\lamin &= n(\underline{H}(\bX, \bY) -\Delta), \\ \lamax &=
n(\overline{H}(\bX, \bY) +\Delta),
\end{align*}
and so, $N = O(n)$. Using Theorem~\ref{t:converse_general} with
\[
\gamma = n(\Romn - \delta)
\]
for arbitrarily fixed $\delta > 0$, we get for $n$ sufficiently large
that
\begin{align*}
L_{\ep_n}(X_n, Y_n) &\geq n(\Romn - \delta) + o(n).
\end{align*}
Since $\delta > 0$ is arbitrary, the proof is complete.  \qed
\section{Strong converse and second-order asymptotics}
\label{s:strong_converse}

We now turn to the case of IID observations $(X^n, Y^n)$ and establish
the second-order asymptotic term in $L_{\ep}(X^n, Y^n)$.

\begin{theorem}\label{t:second_order}
For every $0< \ep < 1 $,
\begin{align*}
L_{\ep}\left(X^n, Y^n\right) = n H(X \triangle Y) + \sqrt{n V}
Q^{-1}(\ep) + o(\sqrt{n}).
\end{align*}
\end{theorem}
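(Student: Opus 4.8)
The plan is to obtain Theorem~\ref{t:second_order} as a direct corollary of the single-shot bounds in Theorems~\ref{t:interactive_data_exchange} and~\ref{t:converse_general}, by instantiating them with IID sources $(X^n,Y^n)$ and invoking the Berry--Esseen theorem for the sum $h(X^n\triangle Y^n)=\sum_{i=1}^n h(X_i|Y_i)+h(Y_i|X_i)$, whose i.i.d. summands have mean $H(X\triangle Y)$ and variance $V$ (finiteness of the third absolute moment follows from the finiteness of $\cX,\cY$). The key quantity to control is $\lambda_\ep$, the $\ep$-tail of $h(X^n\triangle Y^n)$, which by Berry--Esseen equals $nH(X\triangle Y)+\sqrt{nV}\,Q^{-1}(\ep)+O(1)$.

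For the achievability direction I would set $\lamin=n(\underline H(\bX,\bY)-\delta)$ and $\lamax=n(\overline H(\bX,\bY)+\delta)$ for small fixed $\delta>0$ (so that $\bP{X^nY^n}{\cT_0}\to 0$ exponentially by the AEP), choose the step size $\Delta=\sqrt n$ and $N=(\lamax-\lamin)/\Delta=O(\sqrt n)$, and pick $\eta=\tfrac12\log n$ so that $N2^{-\eta}\to 0$. Then Theorem~\ref{t:interactive_data_exchange} gives a protocol of length
\[
l_{\max}=\lambda_\ep+\Delta+N+\eta+1 = nH(X\triangle Y)+\sqrt{nV}\,Q^{-1}(\ep)+O(\sqrt n)\cdot o(1)?
\]
— here I need to be a little more careful: $\Delta=\sqrt n$ and $N=O(\sqrt n)$ are themselves $O(\sqrt n)$, which is NOT $o(\sqrt n)$. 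The fix is the general principle emphasized in the introduction: the optimal $\Delta$ is the square root of the \emph{essential} spectrum length of the relevant density. For a \emph{fixed} IID source, $h(X^n|Y^n)$ (and $h(X^n\triangle Y^n)$) concentrate in a window of width $O(\sqrt n)$, so I should take $\lamin,\lamax$ to be the $n H \pm c\sqrt{n\log n}$ (or $\pm c\sqrt n$) large-probability bounds for the conditioning density, giving $\lamax-\lamin=O(\sqrt n)$ and hence $\Delta=O(n^{1/4})$, $N=O(n^{1/4})$, and one can take $\eta=\log^2 n$; then $\Delta+N+\eta+1=o(\sqrt n)$ and $\bP{}{\cT_0}, N2^{-\eta}$ can be driven below any slack via moderate deviations. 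This yields $L_\ep(X^n,Y^n)\le nH(X\triangle Y)+\sqrt{nV}\,Q^{-1}(\ep)+o(\sqrt n)$.

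For the converse I apply Theorem~\ref{t:converse_general} with, again, $\lamin,\lamax$ chosen as the $nH(X^nY^n)\pm c\sqrt n$ large-probability bounds for $h(X^nY^n)$ (so $N=O(\sqrt n)$ and thus $6\log N=O(\log n)=o(\sqrt n)$), $\Delta=O(n^{1/4})$, $\eta$ a small fixed constant, and $\gamma=\lambda_{\ep+\zeta}$ for arbitrarily small fixed $\zeta>0$, where $\lambda_{\ep+\zeta}=nH(X\triangle Y)+\sqrt{nV}\,Q^{-1}(\ep+\zeta)+O(1)$ by Berry--Esseen. With these choices $\bPP\gamma=\bP{}{\cE_\gamma}\to \ep+\zeta+o(1)$ while $\bP{}{\cT_0}+1/N\to 0$, so the $\log(\cdots)_+$ term is $\log(\zeta-o(1))=O(1)$, and all correction terms $\Delta,\;6\log N,\;4\log(1/\eta),\;\log(1-2\eta),\;1$ are $o(\sqrt n)$. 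Hence $L_\ep(X^n,Y^n)\ge nH(X\triangle Y)+\sqrt{nV}\,Q^{-1}(\ep+\zeta)+o(\sqrt n)$, and letting $\zeta\downarrow 0$ using continuity of $Q^{-1}$ completes the matching lower bound. The strong converse, Result~\ref{result:strong-converse}, is then immediate since the $\sqrt n$ term is $o(n)$.

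The main obstacle is bookkeeping the spectrum-slicing parameters so that every residual term is genuinely $o(\sqrt n)$ \emph{simultaneously}: the tension is that $\Delta$ must be large enough that $N=(\lamax-\lamin)/\Delta$ (which contributes additively to the communication length and also appears as $6\log N$ and inside $1/N$ in the converse) is small, yet $\Delta$ itself contributes additively — so one must exploit that for a fixed IID source the \emph{conditional} spectrum width $\lamax-\lamin$ is only $O(\sqrt n)$ (not $O(n)$), making $\Delta=\Theta(n^{1/4})$ the sweet spot, and then separately verify via Berry--Esseen / moderate deviations that $\bP{}{\cT_0}$ and $N2^{-\eta}$ vanish fast enough not to perturb the constant $\ep$ in $Q^{-1}(\ep)$. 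The other delicate point is that $Q^{-1}$ is only continuous, not uniformly nice near $0,1$, so the $\zeta\downarrow 0$ limit in the converse (and the analogous slack in achievability) must be taken at the very end, after $n\to\infty$.
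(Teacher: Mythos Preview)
Your proposal is correct and follows essentially the same route as the paper: instantiate the single-shot achievability and converse with IID sources, choose the spectrum-slicing parameters so that all residual terms are $o(\sqrt{n})$, and invoke Berry--Ess\'een for $\sum_i h(X_i\triangle Y_i)$. The paper makes the specific choices $\lamax=nH(XY)+c\sqrt{n}$, $\lamin=nH(XY)-c\sqrt{n}$, and $\Delta=N=\eta=\sqrt{2c}\,n^{1/4}$ in the achievability (matching your corrected choice), while in the converse it takes the cruder $\lamax-\lamin=O(n)$, $N=O(n)$ and absorbs the vanishing slack directly into $Q^{-1}$ via Taylor expansion rather than through your final $\zeta\downarrow 0$ limit---both are equivalent standard maneuvers.
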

{\it Proof.} As before, we only need to choose appropriate parameters
in Theorems~\ref{t:interactive_data_exchange}
and~\ref{t:converse_general}.  Let $T$ denote the third central moment
of the random variable $\romn XY$.

For the achievability part, note that for IID random variables
$(X^n,Y^n)$ the spectrum of $P_{X^nY^n}$ has width
$O(\sqrt{n})$. Therefore, the parameters $\Delta$ and $N$ can be
$O(n^{1/4})$. Specifically, by standard measure concentration bounds
(for bounded random variables), for every $\delta>0$ there exists a
constant $c$ such that with $\lamax = nH(XY) + c\sqrt n$ and $\lamin =
nH(XY) -c\sqrt n$, 
\[
\bPr{(X^n, Y^n)\in \cT_0} \leq \delta.
\] 
For
\begin{align*}
\lambda_n &= n H(X \triangle Y) + \sqrt{n V} Q^{-1}\left(\ep - 2\delta
- \frac{T^3}{2V^{3/2}\sqrt{n}}\right),
\end{align*}
choosing $\Delta = N = \eta = \sqrt{2c}n^{1/4}$, and $l_{\max} =
\lambda_n + 3\Delta + 1$ in Theorem~\ref{t:interactive_data_exchange},
we get a protocol of length $\l_{\max}$ satisfying
\begin{align*}
\bPr{X \neq \hat X, \text{ or } Y \neq \hat Y} \leq \bPr{\sum_{i=1}^n
  h(X_i \triangle Y_i) > \lambda_n} + 2\delta,
\end{align*}
for $n$ sufficiently large.  Thus, the Berry-Ess\'een theorem
(\cf{}~\cite{Fel71}) and the observation above gives a protocol of
length $\l_{\max}$ attaining $\ep$-DE. Therefore, using the Taylor
approximation of $Q(\cdot)$ yields the achievability of the claimed
protocol length; we skip the details of this by-now-standard argument
(see, for instance, \cite{PolPooVer10}).

Similarly, the converse follows by Theorem~\ref{t:converse_general}
and the Berry-Ess\'een theorem upon choosing $\lamax$, $\lamin$, and
$N$ as in the proof of converse part of
Theorem~\ref{t:communication_omniscience_general} when $\lambda_n$ is
chosen to be
\begin{align*}
\lambda_n &= n H(X \triangle Y) + \sqrt{n V} Q^{-1}\left(\ep -
2\frac1N - \frac{T^3}{2V^{3/2}\sqrt{n}}\right) \\ &= n H(X \triangle
Y) + \sqrt{n V} Q^{-1}\left(\ep\right) + O(\log n),
\end{align*}
where the final equality is by the Taylor approximation of
$Q(\cdot)$.\qed

In the previous section, we saw that interaction is necessary to
attain the optimal first order asymptotic term in $L_\ep(X_n, Y_n)$
for a mixture of IID random variables. In fact, even for IID random
variables interaction is needed to attain the correct second order
asymptotic term in $L_\ep(X^n, Y^n)$, as shown by the following
example.

\begin{example}\label{ex:second_order_suboptimal}
Consider random variables $X$ and $Y$ with an indecomposable joint
distribution $\bPP{XY}$ such that the matrix
\begin{align*}
\mathbf{V} = \mathrm{Cov}([-\log \bP{X|Y}{X|Y}, -\log \bP{Y|X}{Y|X}])
\end{align*}
${\bf V}$ is nonsingular. For IID random variables $(X^n,Y^n)$ with
common distribution $\bPP{XY}$, using Proposition
\ref{proposition:simple} and a multidimensional Berry-Ess\'een theorem
($cf.$~\cite{TanKos14}), we get that the second-order asymptotic term
for the minimum length of simple communication  for 
$\ep$-DE is given by\footnote{The achievability part can be derived by
  a slight modification of the arguments in
  \cite{MiyKan95},\cite[Lemma 7.2.1]{Han03}.}
\begin{align*}
L_\ep^{\mathrm{s}}(X^n,Y^n) = n H(X \triangle Y)+ \sqrt{n} D_\ep +
o(\sqrt{n}),
\end{align*}
where
\begin{align*}
D_\ep := \inf\Big\{ r_1 + r_2 : \bPr{ Z_1\leq r_1, Z_2\leq r_2} \ge 1 - \ep \Big\},
\end{align*}
for Gaussian vector $\bZ = [Z_1,Z_2]$ with mean $[0,0]$ and covariance
matrix ${\bf V}$.  Since $\mathbf{V}$ is nonsingular,\footnote{For
  instance, when $X$ is uniform random variable on $\{0,1\}$ and $Y$
  is connected to $X$ via a binary symmetric channel, the covariance
  matrix $\mathbf{V}$ is singular and interaction does not help.}  we
have
\begin{align*}
\sqrt{V} Q^{-1}(\ep) &= \inf\Big\{ r: \mathbb{P}\Big(Z_1 + Z_2 \le r
\Big) \ge 1 - \ep \Big\} \\ &< D_\ep.
\end{align*}
Therefore, $L_\ep(X^n , Y^n)$ has strictly smaller second order term
than $L_\ep^s(X^n, Y^n)$, and interaction is necessary for attaining the optimal second
order term in $L_\ep(X^n, Y^n)$.
\end{example}

\section{Discussion}
We have presented an interactive data exchange protocol and a converse
bound which shows that, in a single-shot setup, the parties can
exchange data using roughly $h(X\Delta Y)$ bits when the parties
observe $X$ and $Y$. Our analysis is based on the information spectrum
approach. In particular, we extend this approach to
enable handling of interactive communication. A key step is the {\it spectrum
  slicing} technique which allows us to split a nonuniform distribution
into almost uniform ``spectrum slices''. Another distinguishing
feature of this work is our converse technique which is based on
extracting a secret key from the exchanged data and using an upper
bound for the rate of this secret key. In effect, this falls under the
broader umbrella of {\it common randomness decomposition} methodology
presented in \cite{TyaThesis} that studies a distributed computing problem by dividing the resulting common randomness into different independent components with operational significance.
 As a consequence, we obtain both the
optimal rate for data exchange for general sources as well as the
precise second-order asymptotic term for IID observations (which in
turn implies a strong converse). Interestingly, none of these optimal
results can be obtained by simple 
communication and interaction is necessary, in general. 
Note that our proposed scheme uses $O(n^{1/4})$ rounds of interaction; it remains open if fewer rounds of interaction will suffice.
 
Another asymptotic regime, which was not considered in this paper, is
the error exponent regime where we seek to characterize the largest
possible rate of exponential decay of error probability with
blocklength for IID observations. Specifically, denoting by
$\bP{\mathtt{err}}{l|X,Y}$ the least probability of error $\ep$ that
can be attained for data exchange by communicating less than $l$ bits,
$i.e.$,
\begin{align*}
\bP{\mathtt{err}}{l|X,Y} := \inf\{ \ep : L_\ep(X,Y) \le l \},
\end{align*}
we seek to characterize the limit of
\[
- \frac{1}{n} \log \bP{\mathtt{err}}{2^{nR}|X^n,Y^n}.
\]
The following result is obtained using a slight modification of our
single-shot protocol for data exchange where the slices of the
spectrum $\cT_i$ in \eqref{e:slice_i} are replaced with type classes and the decoder is replaced by a special case of the $\alpha$-decoder introduced in \cite{Csi82}.
For a fixed rate $R \geq 0$, our modified protocol enables data exchange, with
large probability, for every $(\bx, \by)$ with joint type
$\bPP{\overline{X}\,\overline{Y}}$ such that (roughly)
\[
R> H(\overline X\triangle \overline Y).
\]
The converse part follows from the strong converse of
Result~\ref{result:strong-converse}, together with a standard measure
change argument ($cf.$~\cite{CsiKor11}). The formal proof is given in Appendix A.

\begin{roughtheorem}[{\bf Error Exponent Behaviour}] \label{result:exponent}
For a given rate $R > H(X \triangle Y)$, define
\[
E_{\mathtt{r}}(R) := \min_{\bQQ{\overline{X}\,\overline{Y}}} \left[
  D(\bQQ{\overline{X}\,\overline{Y}} \| \bPP{XY}) + | R -
  H(\overline{X} \triangle \overline{Y})|^+ \right]
\]
and
\[
E_{\mathtt{sp}}(R) := \inf_{\bQQ{\overline{X}\,\overline{Y}} \in
  \cQ(R)} D(\bQQ{\overline{X}\,\overline{Y}} \| \bPP{XY}),
\]
where $|a|^+ = \max\{a,0\}$ and
\begin{align*}
\cQ(R) := \left\{ \bQQ{\overline{X}\,\overline{Y}} : R <
H(\overline{X} \triangle \overline{Y}) \right\}.
\end{align*}
Then, it holds that
\begin{align}
\liminf_{n\to\infty} - \frac{1}{n} \log
\bP{\mathtt{err}}{2^{nR}|X^n,Y^n} \ge E_{\mathtt{r}}(R) \nonumber
\end{align} 
and that
\begin{align}
\limsup_{n\to\infty} - \frac{1}{n} \log
\bP{\mathtt{err}}{2^{nR}|X^n,Y^n} &\le E_{\mathtt{sp}}(R).  \nonumber
\end{align}
\end{roughtheorem}
$E_{\mathtt{r}}(R)$ and $E_{\mathtt{sp}}(R)$, termed the {\it random
  coding exponent} and the {\it sphere-packing exponent}, may not
match in general. However, when $R$ is sufficiently close to $H(X
\triangle Y)$, the two bounds can be shown to coincide. In fact, in
Appendix B we exhibit an example where the optimal error exponent
attained by interactive protocols is strictly larger than that
attained by simple communication. Thus, in the error exponent regime,
too, interaction is strictly necessary.

\appendix

\subsection{Achievability Proof of Result \ref{result:exponent}}

In this appendix, we consider the error exponent and prove Result \ref{result:exponent}.
We use the method of types. The type
of a sequence $\mathbf{x}$ is denoted by $\bPP{\mathbf{x}}$. For a given type $\bPP{\overline{X}}$, the 
set of all sequences of type $\bPP{\overline{X}}$ is denoted by
$\cT_{\overline{X}}^n$. The set of all types on alphabet $\cX$ is
denoted by $\cP_n(\cX)$. We use similar notations for joint types and
conditional types. For a pair $(\mathbf{x},\mathbf{y})$ with joint
type $\bPP{\overline{X}\,\overline{Y}}$, we denote $H(\mathbf{x}
\triangle \mathbf{y}) = H(\overline{X} \triangle \overline{Y})$. We
refer the reader to~\cite{CsiKor11} for basic results on the method of
type.  

Fix $R > 0$ as the rate of communication exchanged (by both the
parties), but without adding the 
rate contributed by ACK-NACK messages exchanged. We consider $r$
rounds protocol, where $r = \lceil \frac{R}{\Delta} \rceil$ for
a fixed $\Delta > 0$. Let $R_i = i \Delta$ for $i =1 ,\ldots,r$. Basic
idea of the protocol is the same as our single-shot protocol, i.e., we
increment the hash size in steps. However, when we consider the error
exponent regime, to reduce the contribution of ``binning error'' to the error
exponent, we need a more carefully designed protocol.

For a given joint type
$\bPP{\overline{X}\,\overline{Y}}$, the key modification we make is to
delay the start of communication by Party 2 (which started once $R_i >
H(\overline{X} | \overline{Y})$ was satisfied). Heurisitcally, once Party 2 can decode $\mathbf{x}$ correctly, he can send $\mathbf{y}$ to Party 1 without error by using 
roughly\footnote{Since Party 2 has to send the joint type $\bPP{\overline{X}\,\overline{Y}}$ to Party 1,
additional $|\cX||\cY|\log(n+1)$ bits are needed.} 
$n H(\overline{Y}|\overline{X})$ bits, where $\bPP{\overline{X}\,\overline{Y}} = \bPP{\mathbf{x}\mathbf{y}}$. Thus, the budget Party 1 can use is $R - H(\overline{Y}|\overline{X})$,
which is larger than $H(\overline{X}|\overline{Y})$ when $R >
H(\overline{X} \triangle \overline{Y})$. Therefore, allowing Party 1
to communicate more before Party 2 starts may reduce the binning error probability.

Motivated by this reason, we assign the timing of decoding to each joint type as follows:
\begin{align*}
\phi(\bPP{\overline{X}\,\overline{Y}}) 
&:= \min\left\{ i : 1 \le i \le r, R_i \ge R - H(\overline{Y}|\overline{X}) - \Delta \right\} \\
&= \max\left\{ i : 1 \le i \le r, R_i <  R - H(\overline{Y}|\overline{X}) \right\}
\end{align*}
if $R - H(\overline{Y}|\overline{X}) - \Delta > 0$, and $\phi(\bPP{\overline{X}\,\overline{Y}}) = 0$ is $R - H(\overline{Y}|\overline{X}) - \Delta \le 0$.

For given hash functions $\mathbf{h} = (h_1,\ldots,h_r)$ with $h_i:\cX^n \to \{1,\ldots, 2^{\lceil n \Delta \rceil} \}$, 
let $N_{\mathbf{h}}(\overline{X}\,\hat{X}\,\overline{Y})$ denote, for each joint type $\bPP{\overline{X}\,\hat{X}\,\overline{Y}}$, the number of pairs 
$(\mathbf{x},\mathbf{y}) \in \cT_{\overline{X}\,\overline{Y}}^n$ such that for some 
$\hat{\mathbf{x}} \neq \mathbf{x}$ with $\bPP{\mathbf{x}\hat{\mathbf{x}}\mathbf{y}} = \bPP{\overline{X}\,\hat{X}\,\overline{Y}}$,
the relations 
\begin{align*}
h_i(\mathbf{x}) = h_i(\hat{\mathbf{x}}),~i =1,\ldots,\phi(\bPP{\hat{X}\,\overline{Y}})
\end{align*}
hold. The next result is a slight modification of a lemma in \cite[Section 3]{Csi82}; the proof
is almost the same and is omitted.
\begin{lemma} 
There exist hash functions $\mathbf{h} = (h_1,\ldots,h_r)$ such that 
for every joint type $\bPP{\overline{X}\,\hat{X}\,\overline{Y}}$ such
that $\phi(\bPP{\hat{X}\,\overline{Y}}) \neq 0$, the following bound holds:
\begin{align}
\frac{N_{\mathbf{h}}(\overline{X}\,\hat{X}\,\overline{Y})}{|\cT_{\overline{X}\,\overline{Y}}^n|} \le \exp\left\{ - n ( R_{\phi(\bPP{\hat{X}\,\overline{Y}})} - H(\hat{X} | \overline{X}\,\overline{Y}) - \delta_n) \right\},
\end{align}
where
\begin{align*}
\delta_n = |\cX|^2 |\cY| \frac{\log (n+1)}{n}.
\end{align*}
\end{lemma}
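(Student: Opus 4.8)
\emph{Proof sketch (following \cite[Section 3]{Csi82}).} The plan is a routine random-coding argument: exhibit a good family $\mathbf{h} = (h_1,\dots,h_r)$ by the probabilistic method. First, draw $h_1,\dots,h_r$ independently, each $h_i$ uniform over all maps $\cX^n \to \{1,\dots,2^{\lceil n\Delta\rceil}\}$ (drawing instead from a $2$-universal family goes through verbatim and makes the construction explicit, since the only property used below is $\bPr{h_i(\mathbf{x}) = h_i(\hat{\mathbf{x}})} \le 2^{-\lceil n\Delta\rceil}$ whenever $\mathbf{x}\neq\hat{\mathbf{x}}$). Fix a joint type $\bPP{\overline{X}\,\hat{X}\,\overline{Y}}$ on $\cX\times\cX\times\cY$ and set $\phi := \phi(\bPP{\hat{X}\,\overline{Y}})$; since $\phi$ depends only on the prescribed $(\hat{X},\overline{Y})$-marginal it is a fixed integer in the estimate, and we may assume $\phi\neq 0$.

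Next I would bound $\bEE[N_{\mathbf{h}}(\overline{X}\,\hat{X}\,\overline{Y})]$. Writing $N_{\mathbf{h}}$ as a sum over $(\mathbf{x},\mathbf{y})\in\cT_{\overline{X}\,\overline{Y}}^n$ of the indicator that an admissible $\hat{\mathbf{x}}$ exists, and dominating that indicator by a union bound over candidate $\hat{\mathbf{x}}$, linearity of expectation together with independence of the $h_i$ yields, for each $(\mathbf{x},\mathbf{y})$, an expected contribution of at most the number of $\hat{\mathbf{x}}$ with $\bPP{\mathbf{x}\hat{\mathbf{x}}\mathbf{y}} = \bPP{\overline{X}\,\hat{X}\,\overline{Y}}$, times $2^{-\phi\lceil n\Delta\rceil}$. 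The former is the size of a conditional type class, hence at most $\exp\{nH(\hat{X}\mid\overline{X}\,\overline{Y})\}$ by the method of types, and $2^{-\phi\lceil n\Delta\rceil}\le\exp\{-nR_\phi\}$ since $R_\phi = \phi\Delta$. Summing over the $|\cT_{\overline{X}\,\overline{Y}}^n|$ pairs and dividing through,
\[
\bEE\!\left[\frac{N_{\mathbf{h}}(\overline{X}\,\hat{X}\,\overline{Y})}{|\cT_{\overline{X}\,\overline{Y}}^n|}\right] \;\le\; \exp\big\{-n\big(R_\phi - H(\hat{X}\mid\overline{X}\,\overline{Y})\big)\big\}.
\]

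Finally I would upgrade this expectation bound to one holding simultaneously for all joint types. Since there are at most $(n+1)^{|\cX|^2|\cY|} = \exp\{n\delta_n\}$ joint types on $\cX\times\cX\times\cY$, applying Markov's inequality to each with slack $\exp\{n\delta_n\}$ and then union-bounding shows that the event ``some joint type with $\phi\neq0$ violates the claimed inequality'' has probability strictly below $1$; hence some single realization of $\mathbf{h}$ satisfies
\[
\frac{N_{\mathbf{h}}(\overline{X}\,\hat{X}\,\overline{Y})}{|\cT_{\overline{X}\,\overline{Y}}^n|} \;\le\; \exp\big\{-n\big(R_{\phi(\bPP{\hat{X}\,\overline{Y}})} - H(\hat{X}\mid\overline{X}\,\overline{Y}) - \delta_n\big)\big\}
\]
for every joint type at once, which is the lemma. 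The one point requiring care -- and the reason the exponent features $R_{\phi(\bPP{\hat{X}\,\overline{Y}})}$ rather than an index attached to $\bPP{\overline{X}\,\overline{Y}}$ -- is the bookkeeping of the variable stopping round: the number of hash rounds a spurious $\hat{\mathbf{x}}$ must survive is dictated by the type of $(\hat{\mathbf{x}},\mathbf{y})$, so $\phi$ must stay pinned to the value attached to the fixed target joint type throughout the union bound over $\hat{\mathbf{x}}$, which it does by construction. Everything else is the standard type-counting estimate, so, as in \cite{Csi82}, the details are omitted.
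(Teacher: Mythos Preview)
Your sketch is correct and is precisely the standard random-coding/type-counting argument from \cite[Section~3]{Csi82} that the paper invokes; since the paper omits the proof entirely and defers to that reference, your approach \emph{is} the paper's approach. One cosmetic point: in the last step Markov plus a union bound over $(n+1)^{|\cX|^2|\cY|}$ types only gives probability $\le 1$ on the nose, so to get ``strictly below $1$'' either note that the actual number of types is strictly smaller than that crude bound, or (cleaner) bound the expectation of $\sum_{\text{types}} N_{\mathbf h}(\cdot)\,\exp\{n(R_\phi-H)\}/|\cT^n_{\overline X\,\overline Y}|$ by the number of types and pick a realization achieving at most the mean.
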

For the decoder, we use the {\em minimum sum conditional entropy
  decoder}, which is a kind of $\alpha$-decoder introduced in
\cite{Csi82}.

Our protocol is described in Protocol \ref{p:interactive-dx-exponent}.
\begin{protocol}[h]
\caption{Type-based interactive data exchange protocol}
\label{p:interactive-dx-exponent}
\KwIn{Observations $X^n$ and $Y^n$, parameter $\Delta$, and rate $R$}
\KwOut{Estimate $\hat X^n$ of $X^n$ at Party 2 and $\hat{Y}^n$ of $Y^n$ at Party 1}
\While{ $1\le i\le N$ and Party 2 did not send an ACK}
{
Party 1 sends $\prot_{2i-1} = h_i(X^n)$\\
\eIf{for $Y^n = \mathbf{y}$, Party 2 finds a unique $\mathbf{x}$ such that $\phi(\bPP{\mathbf{x}\mathbf{y}}) = i$,
$h_j(\mathbf{x}) = \prot_{2j-1},\, \forall\, 1\le j\le i$, and $H(\mathbf{x} \triangle \mathbf{y}) \le H(\hat{\mathbf{x}} \triangle \mathbf{y})$ for every $\hat{\mathbf{x}} \neq \mathbf{x}$
 satisfying $h_j(\hat{\mathbf{x}}) = \prot_{2j-1},\, \forall\, 1\le j\le i$} 
{
set $\hat X^n =  \mathbf{x}$\newline
send back $\prot_{2i} = \text{ACK}$ 
}
{
\eIf{More than one such $\mathbf{x}$ found}
{
protocol declares an error
}
{
send back $\prot_{2i} = \text{NACK}$
}
}
Reset $i \rightarrow i+1$
}
\If{No $\hat X^n$ found at Party 2}
{
Protocol declares an error
}
Party 2 send the joint type $\bPP{\overline{X}\,\overline{Y}}$ of $(\hat{X}^n,Y^n) = (\mathbf{x},\mathbf{y})$, and send the index of $\mathbf{y}$ among $\cT_{\overline{Y}|\overline{X}}^n(\mathbf{x})$.
\end{protocol}

The achievability part of Result \ref{result:exponent} can be seen as follows.
Fix a joint type $\bPP{\overline{X}\,\overline{Y}}$.
If $\phi(\bPP{\overline{X}\,\overline{Y}}) = 0$, then an error occurs
whenever $(\mathbf{x},\mathbf{y}) \in
\cT_{\overline{X}\,\overline{Y}}^n$. 
We also note that $R - H(\overline{Y}|\overline{X}) - \Delta \le 0$ implies $|R - H(\overline{X} \triangle \overline{Y}) - \Delta |^+ = 0$.
Thus, the probability of this kind of error is upper bounded by 
\begin{align} 
&\sum_{\bPP{\overline{X}\,\overline{Y}} \in \cP_n(\cX \times \cY) \atop  \phi(\bPP{\overline{X}\,\overline{Y}}) = 0} 
 \exp\{ - n [ D(\bPP{\overline{X}\,\overline{Y}} \| \bPP{XY}) \nonumber \\ 
&\hspace{35mm} + | R - H(\overline{X} \triangle \overline{Y})  - \Delta|^+ ]\}.
 \label{eq:error-first-kind}
\end{align}
Next, consider the case when  $\phi(\bPP{\overline{X}\,\overline{Y}}) \ge 1$.
For given conditional type $\bPP{\hat{X}|\overline{X}\,\overline{Y}}$, a sequence $\hat{\mathbf{x}}$ with
$(\mathbf{x},\hat{\mathbf{x}},\mathbf{y}) \in \cT^n_{\overline{X}\,\hat{X}\,\overline{Y}}$ causes an error when
\begin{enumerate}
\item $\phi(\bPP{\hat{X}\,\overline{Y}}) \le \phi(\bPP{\overline{X}\,\overline{Y}})$,

\item 
\begin{align*}
h_i(\hat{\mathbf{x}}) = h_i(\mathbf{x}),~i=1,\ldots,\phi(\bPP{\hat{X}\,\overline{Y}})
\end{align*}

\item 
\begin{align*}
H(\hat{X} \triangle \overline{Y})  \le H(\overline{X} \triangle \overline{Y}).
\end{align*}
\end{enumerate}
Also note that $\phi(P_{\overline{X}\,\overline{Y}}) = i$ implies 
\begin{align*}
H(\overline{Y}|\overline{X}) < R - R_i,
\end{align*}
i.e., once $\mathbf{x}$ is recovered correctly, $\mathbf{y}$ can be sent without an error.
Thus, the error probability of this kind is upper bounded by
\begin{align*}
& \sum_{ P_{\overline{X}\,\overline{Y} }\in \cP_n(\cX \times \cY) \atop  \phi(\bPP{\overline{X}\,\overline{Y}}) \ge 1 } 
 \exp\{ -nD(\bPP{\overline{X}\,\overline{Y} } \| \bPP{XY})\} \\
&~~~  \sum_{ \bPP{\hat{X}|\overline{X}\,\overline{Y}} \atop H(\hat{X} \triangle \overline{Y})  \le H(\overline{X} \triangle \overline{Y}) }
 \frac{N_{\mathbf{h}}(\overline{X}\,\hat{X}\,\overline{Y})}{|\cT_{\overline{X}\,\overline{Y}}^n|} \\
&\le \sum_{ \bPP{\overline{X}\,\overline{Y} } \in \cP_n(\cX \times \cY) \atop \phi(\bPP{\overline{X}\,\overline{Y}}) \ge 1}
 \exp\{ -nD(\bPP{\overline{X}\,\overline{Y} } \| \bPP{XY})\} 
 \sum_{ \bPP{\hat{X}|\overline{X}\,\overline{Y}} \atop H(\hat{X} \triangle \overline{Y})  \le H(\overline{X} \triangle \overline{Y})  } \\
&~~~ 
 \exp\{ - n |R_{\phi(\bPP{\hat{X}\overline{Y}})} - H(\hat{X}|\overline{X}\,\overline{Y}) - \delta_n|^+ \} \\
&\le \sum_{\bPP{\overline{X}\,\overline{Y} } \in \cP_n( \cX \times \cY) \atop \phi(\bPP{\overline{X}\,\overline{Y}}) \ge 1}
 \exp\{ -nD(\bPP{\overline{X}\,\overline{Y} } \| \bPP{XY})\} 
 \sum_{\bPP{\hat{X}|\overline{X}\,\overline{Y}} \atop H(\hat{X} \triangle \overline{Y})  \le H(\overline{X} \triangle \overline{Y}) } \\
&~~~ 
 \exp\{ - n |R - H(\overline{Y}|\hat{X}) - H(\hat{X}|\overline{X}\,\overline{Y}) - \Delta - \delta_n|^+ \} \\
&\le \sum_{\bPP{\overline{X}\,\overline{Y} } \in \cP_n(\cX \times \cY) \atop \phi(\bPP{\overline{X}\,\overline{Y}}) \ge 1}
 \exp\{ -nD(\bPP{\overline{X}\,\overline{Y} } \| \bPP{XY})\}
 \sum_{\bPP{\hat{X}|\overline{X}\,\overline{Y}} \atop H(\hat{X} \triangle \overline{Y})  \le H(\overline{X} \triangle \overline{Y}) } \\
&~~~ \exp\{ - n |R - H(\overline{Y}|\hat{X}) - H(\hat{X}|\overline{Y}) - \Delta - \delta_n|^+ \} \\
&= \sum_{\bPP{\overline{X}\,\overline{Y} } \in \cP_n(\cX \times \cY) \atop \phi(\bPP{\overline{X}\,\overline{Y}}) \ge 1}
 \exp\{ -nD(\bPP{\overline{X}\,\overline{Y} } \| \bPP{XY})\}
 \sum_{\bPP{\hat{X}|\overline{X}\,\overline{Y}} \atop H(\hat{X} \triangle \overline{Y})  \le H(\overline{X} \triangle \overline{Y}) } \\
&~~~ \exp\{ - n |R - H(\hat{X} \triangle \overline{Y}) - \Delta - \delta_n|^+ \} \\
&\le \sum_{\bPP{\overline{X}\,\overline{Y} } \in \cP_n(\cX \times \cY) \atop \phi(\bPP{\overline{X}\,\overline{Y}}) \ge 1}
 \exp\{ -nD(\bPP{\overline{X}\,\overline{Y} } \| \bPP{XY})\}
 \sum_{\bPP{\hat{X}|\overline{X}\,\overline{Y}} \atop H(\hat{X} \triangle \overline{Y})  \le H(\overline{X} \triangle \overline{Y}) } \\
&~~~ \exp\{ - n |R - H(\overline{X} \triangle \overline{Y}) - \Delta - \delta_n|^+ \} \\
&\le \sum_{ \bPP{\overline{X}\,\overline{Y} } \in \cP_n(\cX \times \cY) \atop \phi(\bPP{\overline{X}\,\overline{Y}}) \ge 1}
 \exp\{ -nD(\bPP{\overline{X}\,\overline{Y} } \| \bPP{XY})\}
 (n+1)^{|{\cal X}|^2  |{\cal Y}|} \\
&~~~ \exp\{ - n |R - H(\overline{X} \triangle \overline{Y}) - \Delta - \delta_n|^+ \}.
\end{align*}
Thus, by combining this with \eqref{eq:error-first-kind}, the total error probability is upper bounded by
\begin{align*}
&(n+1)^{|\cX|^2|\cY|+ |\cX||\cY|} 
 \exp\{ - n \min_{\bPP{\overline{X}\,\overline{Y}}} [ D(\bPP{\overline{X}\,\overline{Y} } \| \bPP{XY}) \\
 &~~~~~~~~~~~~~~~~~~~~~~~~~+  |R - H(\overline{X} \triangle \overline{Y}) - \Delta - \delta_n|^+ ] \}.
\end{align*}
Since $\Delta$ can be taken arbitrarily small, and the number of bits
needed to send ACK-NACK is at most $r$.\footnote{Our type-based
  protocol uses only constant number of rounds of interaction
  (independent of $n$).}
Consequently, Protocol \ref{p:interactive-dx-exponent} attains the exponent given in Result \ref{result:exponent}.

\subsection{An Example Such That Interaction Improves Error Exponent}

Consider the following source: $\cX$ and $\cY$ are both binary, and
$\bPP{XY}$ is given by
\begin{align*}
\bP{XY}{0,0} = \bP{XY}{1,0} = \bP{XY}{1,1} = \frac{1}{3},
\end{align*} 
that is, $X$ and $Y$ are connected by a $Z$-channel.  To evaluate
$E_{\mathtt{sp}}(R )$, without loss of generality, we can assume that
$\bQ{\overline{X}\,\overline{Y}}{0,1} = 0$ (since otherwise
$D(\bQQ{\overline{X}\,\overline{Y}} \| \bPP{XY}) = \infty$).  Let us
consider the following parametrization:
\begin{align*}
\bQ{\overline{X}\,\overline{Y}}{0,0} =
u,~\bQ{\overline{X}\,\overline{Y}}{1,0} = 1 - u -
v,~\bQ{\overline{X}\,\overline{Y}}{1,1} = v,
\end{align*}
where $0 \le u,v \le 1$.  Then, we have
\begin{align} \label{eq:divergence-u-v-form}
D(\bQQ{\overline{X}\,\overline{Y}} \| \bPP{XY}) = \log 3 -
H(\{u,1-u-v,v \})
\end{align}
and
\begin{align*}
 \lefteqn{ H(\overline{X}|\overline{Y}) + H(\overline{Y}|\overline{X}) } \\
 &=
 \kappa(u,v) \\ &:= (1-v) h\left(\frac{u}{1-v}\right) + (1-u)
 h\left(\frac{v}{1-u}\right).
\end{align*}
When the rate $R$ is sufficiently close to $H(X\triangle Y) =
\kappa(1/3,1/3) = 4/3$, the set $\cQ(R )$ is not empty.\footnote{In
  fact, we can check that $\frac{\kappa(z,z)}{dz}\Big|_{z=1/3} = -1$,
  and thus the function $\kappa(z,z)$ takes its maximum away from
  $(1/3,1/3)$.}  Since \eqref{eq:divergence-u-v-form} and
$\kappa(u,v)$ are both symmetric with respect to $u$ and $v$ and
\eqref{eq:divergence-u-v-form} and $\cQ(R )$ are convex function and
convex set, respectively, the optimal solution $(u^*,v^*)$ in the
infimum of $E_{\mathtt{sp}}(R )$ satisfies $u^*= v^*$.  Furthermore,
since $R > \kappa(1/3,1/3)$, we also have $u^* = v^* \neq 1/3$.

Note that for $R$ sufficiently close to $H(X\triangle Y)$,
$E_{\mathtt{sp}}(R )$ can be shown to equal $E_{\mathtt{r}}(R )$. Thus, to show that a
simple communication is strictly suboptimal for error exponent, it
suffices to show that $E_{\mathtt{sp}}(R ) >
E_{\mathtt{sp}}^{\mathrm{s}}(R )$, where the latter quantity
$E_{\mathtt{sp}}^{\mathrm{s}}(R )$ corresponds to the sphere packing
bound for error exponent using simple communication and is given by
\begin{align*}
E^{\mathrm{s}}_{\mathtt{sp}}(R ) := \max_{\substack{(R_1,R_2):
    \\ R_1+R_2 \le R}} \inf_{\bQQ{\overline{X}\,\overline{Y}} \in
  \cQ^{\mathrm{s}}(R_1,R_2)} D(\bQQ{\overline{X}\,\overline{Y}} \|
\bPP{XY})
\end{align*}
and
\begin{align*}
\cQ^{\mathrm{s}}(R_1,R_2) := \left\{ \bQQ{\overline{X}\,\overline{Y}}
: R_1 < H(\overline{X}|\overline{Y}) \mbox{ or } R_2 <
H(\overline{Y}|\overline{X}) \right\}.
\end{align*}
Since the source is symmetric with respect to $X$ and
$Y$, for evaluating $E_{\mathtt{sp}}^{\mathrm{s}}(R )$ we can assume 
without loss of generality that $R_1 \ge R_2$. Let
$u^\dagger := u^*$ and $v^\dagger := \frac{1-u^\dagger}{2}$ so that
$\frac{v^\dagger}{1-u^\dagger} = \frac{1}{2}$. Let
$\mathrm{Q}^{\dagger}_{\overline{X}\,\overline{Y}}$ be the
distribution that corresponds to $(u^\dagger,v^\dagger)$.  Note that
$\mathrm{Q}^{\dagger}_{\overline{X}\,\overline{Y}}$ satisfies
\begin{align*}
H(\overline{Y}|\overline{X}) &= (1-u^\dagger) h\left(
\frac{v^\dagger}{1-u^\dagger} \right) \\ &> (1-u^*) h\left(
\frac{v^*}{1-u^*} \right) \\ &\ge \frac{R}{2} \\ &\ge R_2,
\end{align*}
and so, $\mathrm{Q}^{\dagger}_{\overline{X}\,\overline{Y}} \in
\cQ^{\mathrm{s}}(R_1,R_2)$. For this choice of
$\mathrm{Q}^{\dagger}_{\overline{X}\,\overline{Y}}$, we have
\begin{align*}
D(\mathrm{Q}^{*}_{\overline{X}\,\overline{Y}} \| \bPP{XY}) &= \log 3 -
H(\{ u^*,1-u^* - v^*, v^*\}) \\ &= \log 3 - h(1-u^*) - (1-u^*) h\left(
\frac{v^*}{1-u^*}\right) \\ &> \log 3 - h(1-u^\dagger) - (1-u^\dagger)
h\left( \frac{v^\dagger}{1-u^\dagger} \right) \\ &=
D(\mathrm{Q}^{\dagger}_{\overline{X}\,\overline{Y}} \| \bPP{XY}),
\end{align*}
which implies $E_{\mathtt{sp}}(R ) > E_{\mathtt{sp}}^{\mathrm{s}}(R
)$.

\bibliography{IEEEabrv,references}
\bibliographystyle{IEEEtranS}


\end{document}